\newcommand{\beq}{\begin{equation}}
\newcommand{\eeq}{\end{equation}}
\newcommand{\bqa}{\begin{eqnarray}}
\newcommand{\eqa}{\end{eqnarray}}
\definecolor{green}{rgb}{0.00,0.50,0.00}
\newtheorem{theorem}{Theorem}
\newtheorem{definition}[theorem]{Definition}
\newtheorem{proposition}[theorem]{Proposition}
\newtheorem{remark}[theorem]{Remark}
\newenvironment{proof}[1][Proof]{\noindent\textbf{#1.} }{\ \rule{0.5em}{0.5em}}
\begin{document}


\title{Non-Markovian Quantum Feedback Networks I: Quantum Transmission Lines, Lossless Bounded Real Property and Limit Markovian Channels}

\author{John E.~Gough} \email{jug@aber.ac.uk}
   \affiliation{Aberystwyth University, SY23 3BZ, Wales, United Kingdom}

\begin{abstract}
The purpose of this paper is to set out the problems of modeling quantum communication and signal processing where the communication between systems via a non-Markovian channel. This is a general feature of quantum transmission lines. Our ultimate objective is to extend the networks rules that have been developed for Markovian models. To this end we recall the Hamiltonian description of such non-Markov models of transmission lines and their quantization. These have occurred in the context of non-equilibrium thermodynamics, but our interest is in the transmission lines as carriers of information rather than heat baths. We show that there is an analytic scattering matrix associated with these models and that stability may be formulated in terms of the lossless bounded real property.
Noting that the input and output fields do not separately satisfy a non-self-demolition principle, we discuss the rigorous limit in which such models appear Markov and so amenable to standard approaches of quantum filtering and control.
\end{abstract}

\maketitle

\section{Introduction}
\label{sec:introduction}
Markovianity is a standard modeling assumption for dealing with stochastic systems in the physical and engineering sciences. From it we obtain a simple probabilistic structure where all one requires is an initial state and a transition mechanism to propagate from present to future states, and from this one can compute multi-time expectations. Moreover, Markov models can be dilated - that is, seen as the sub-dynamics of a larger model described, for instance, by stochastic differential equations. In many applications, one starts with such a system+noise model. There exists an enormous literature on the filtering and control of such models. As is well-known, Markovianity is a property of convenience and is by no means a natural situation. For instance, a subsystem of a Markov system will typically not be Markov: in such cases, the complement of the subsystem may act as memory. Conversely, given a non-Markov model, one often tries to realize it as a sub-system of a Markov model. Central to the modeling procedure is the identification of the system so that there is sufficient information contained in its state to give the Markov property: ideally this is minimal, that is, we have no more degrees of freedom in the model than just the system's, however this is not essential.

In this article we will be concerned with quantum open systems however much of our discussion is of relevance to classical stochastic models. There has been a surge of interest in recent years in so-called \lq\lq non-Markovian quantum models". Typically these deal with a system coupled to an external environment (typically a infinite assembly of oscillators with spectral densities $\mathbf{J} (\omega )$) where the resulting dynamical evolution of the system exhibits memory effects. (We should stress that the term non-Markovian is typically used in this context by theoretical physicists. But, as we shall see in the first model of a quantum transmission line with ohmic spectral density below, there are natural non-Markovian models that actually exhibit a memory-less property.) The non-Markovian feature is then frequently presented as a resource - for example a \lq\lq quantum memory" - that may be exploited for quantum technologies. From the engineering viewpoint, described above, this is a rather unusual way to proceed, however this is explained by the fact that the starting point here is often a full physical model.

On the other hand, quantum technology has progressed in recent years to the stage where principles from (classical) engineering are now being proposed as essential control and stabilization techniques: these are formulated within the Markovian regime. An increasingly important role is being played by the concept of feedback. Measurement-based feedback was proposed by Belavkin \cite{Belavkin1}, and subsequently developed as a realistic tool in quantum optics by Wiseman and Milburn \cite{WM_book}. Central to this has been the used of dilations of quantum Markov systems using quantum stochastic calculus \cite{HP84}-\cite{Par92}. More recently, there has been considerable interest in autonomous schemes which avoid the computational overheads and delays, and the unwanted side effects of quantum measurement \cite{WM94} - \cite{Crisafulli13}. These belong to the category of coherent feedback quantum control schemes. With M.R. James, the author has developed a systematic approach to modeling networks of locally Markovian systems connected by instantaneous quantum field transmission lines \cite{GouJam09a,GouJam09b}. Here one can give explicit rules for the feedforward, feedback, cascading, super-positioning, etc., of arbitrary network architectures. (We remark that a object oriented visual hardware language for assembling devices, and then working with devices as basic objects has been developed for Markovian Quantum feedback networks by Tezak et al., \cite{Tezak}.) This modeling famework has lead to rapid development in quantum feedback engineering in recen years \cite{GGY}-\cite{GZ}.

However, the natural \lq\lq Hamiltonian'' description for open circuits is not Markovian (in the probabilistic sense) and generally not even memory-less either, see \cite{Louisell}-\cite{GarZol00}. The aim of this paper is to begin the program of extending the quantum feedback network theory to realistic models of system-field coupling taking standard models from electronics as guide. To some extent this has already been initiated in \cite{YD1984}, however, not to the extent of giving systematic rules for interconnection. We should also mention the recent analysis in \cite{ZLWJN13} which starts with the non-Markov models as treated in \cite{GarZol00} and sets about finding the analogue of our series product rule \cite{GouJam09a}. In electrical engineering one uses network theory \cite{Anderson_Vongpanitlerd} to describe composite systems, then these devices may be replaced by an equivalent circuit that has the same input-output ($I-V$) characteristics. The success of electrical engineering is that one can give simple rules for connecting components into circuits (the Kirchhoff current and voltage laws) and deduce powerful network theorems \cite{Belevitch}, \cite{Brune}. 

During the development of quantum models for open systems, there was a marked split amongst the quantum probability community between the Markovian and non-Markovian models. The Markovian regime allows for a remarkable extension of many of the core practical concepts of classical stochastic processes to the non-commutative setting, including technologically important methodologies such as filtering. It is now used extensively to model decoherence and dissipation in physical systems. The non-Markovian models emerged from relaxation to thermal equilibrium problems and the motivations often placed considerable emphasis on the requirements of physical correctness to ensure consistency with the the laws of thermodynamics. The reality is that both forms are widely used by theoreticians, though in different applications. In quantum optics, the auto-correlation time associated to a bath of photons is so small compared with the timescale for the system, and one would seriously question why a non-Markovian model is needed. It is here that the power of the quantum probabilistic program has been most apparent in current quantum technologies. The non-Markov models are essential however outside of optical models and it is likely that such approaches are required for treating superconducting qubit systems, solid state systems, etc. As such, we might reasonably expect that a developed theory of non-Markovian quantum feedback networks would be of use for circuit QED systems \cite{Devoret_2015}.

The outline of the paper is as follows. In Section \ref{Sec:Quant} we recall the theory of (ohmic) quantum transmission lines. In Section \ref{sec:Memory} we discuss the extension to models with memory and show that the input-output theory is described by an scattering matrix $\mathbf{S} [s]$ related to the Laplace transform of the memory kernel. As the systems of equations is linear we may apply the well-known description of dissipativity by Willems \cite{Willems}. It is previously known that the Laplace transform of the memory kernel must be positive real, in the sense of Cauer \cite{Cauer}, and this has been argued as an essential criterion for thermodynamic stability of a passive linear system \cite{Meixner} and for the current class of models when the input fields are in a thermal state \cite{FLC88}. Our main result of this section is Theorem \ref{thm:S_LBR} which establishes that the appropriate control-theoretic requirement on $\mathbf{S} [s]$ is the lossless bounded real property and show that this corresponds to the lossless positive real property for the Laplace transform of the memory kernel. From this we deduce the spectral densities $\mathbf{J} (\omega )$ for the external transmission lines. For completeness we give a Hamiltonian model for the transmission line based on a set of spectral densities $\mathbf{J} (\omega )$.

Finally, in Section \ref{sec_Markov} we discuss the Markov limit for these models. This is a crucially important point, both technically and conceptually. In non-Markovian models we find that the charge in the transmission line $q(t)$ at a terminal point may be naturally split as the sum $q_{ \mathrm{in} } (t) + q_{ \mathrm{out} } (t)$ associated with an incoming and an outgoing field. However, the inputs do not commute with themselves at different times, that is the commutator $[ q_{ \mathrm{in} } (t) ,
q_{ \mathrm{in} } (t^\prime )]$ is a non-trivial distribution $g (t-t^\prime )$, (similarly for $q_{ \mathrm{in} } (t)$). To coin a phrase based on \cite{Belavkin1}, this means that the input field is self-demolishing - as is the output field - and therefore there is no possibility to construct a quantum measurement theory (much less quantum filtering theory!) around such models. The situation here is that markovianity is a much more essential requirement in quantum filtering theory than in classical. We give a rigorous formulation of the Markov limit for the general quantum non-Markovian models under consideration based on the quantum stochastic limits \cite{AFL}, \cite{AGL} of van Hove type (weak coupling). The existence of a well defined Markov is crucial for the formulation of a quantum filtering theory \cite{Belavkin1}, see also \cite{CarBook93} and \cite{BouvanHJam07}.

\section{Quantizing Electric Communication Circuits}
\label{Sec:Quant}

\subsection{Hamiltonian Formulation Of Electric Networks}
We give a condensed review of the Hamiltonian formulation of transmission lines and their quantization. The main references are \cite{Louisell} and \cite{YD1984}. The problem has a long history however, going back to the Lamb model in 1900 \cite{Lamb1900}, and the is intimately related to the Ford-Kac-Mazur model \cite{FKM65} for relaxation of a brownian particle to thermal equilibrium (though we do not wish to specify to a thermal state for the transmission line!). This was further developed in \cite{Maassen82} where connections with the Thirring-Schwabl model were made.

\subsubsection{Lossless Circuits}

We consider a component consisting of coupled inductors and capacitors which
may be influence by several external voltages. For each terminal we
attribute a charge $q_{k}\left( t\right) $. We may then collect these
together as a vector 
\begin{eqnarray*}
\mathbf{q}\left( t\right) =\left[ 
\begin{array}{c}
q_{1}\left( t\right) \\ 
\vdots \\ 
q_{n}\left( t\right)
\end{array}
\right]
\end{eqnarray*}
with a similar vector $\mathbf{i}\left( t\right) =\frac{d}{dt}\mathbf{q}%
\left( t\right) $ for the terminal currents. Likewise let $\mathbf{v}\left( t\right) $ be the
vector of terminal voltages $v_{j}\left( t\right) $. The circuit equation is
then 
\begin{eqnarray}
\mathbf{L}\frac{d^{2}}{dt^{2}}\mathbf{q}\left( t\right) +\mathbf{Kq}\left(
t\right) =\mathbf{v}\left( t\right) ,  \label{eq:circuit_eq}
\end{eqnarray}
where $\mathbf{L}\geq 0$ is the inductance matrix (taken to be invertible)
and $\mathbf{K}\geq 0$ is the capacitor matrix. 
These equations come from the Lagrangian 
\begin{eqnarray*}
L\left( \mathbf{q},\mathbf{\dot{q}},t\right) &=&\frac{1}{2}\mathbf{\dot{q}}%
\left( t\right) ^{\top }\mathbf{L\dot{q}}\left( t\right) -\frac{1}{2}\mathbf{%
q}\left( t\right) ^{\top }\mathbf{Kq}\left( t\right) \nonumber\\
&&+\mathbf{q}\left(
t\right) ^{\top }\mathbf{v}\left( t\right) .
\end{eqnarray*}

The energy of the component system is then 
\begin{eqnarray*}
\mathscr{E}_{\mathrm{component}} ( t ) =\frac{1}{2}\mathbf{i} (
t ) ^\top \mathbf{Li} ( t) +\frac{1}{2}\mathbf{q} (t) ^\top \mathbf{Kq} ( t)
\end{eqnarray*}
and we have 
\begin{eqnarray}
\mathscr{\dot{E}}_{\mathrm{component}}\left( t\right) &=&\mathbf{i}\left(
t\right) ^{\top }\left\{ \mathbf{L}\frac{d^{2}\mathbf{q}\left( t\right) }{%
dt^{2}}+\mathbf{Kq}\left( t\right) \right\} \nonumber\\
&\equiv& \mathbf{i}\left( t\right)
^{\top }\mathbf{v}\left( t\right) .  \label{eq:power_component}
\end{eqnarray}

The conjugate variables are defined by 
\begin{eqnarray}
\mathbf{p}=\mathbf{L\dot{q}}\left( t\right)
\end{eqnarray}
and in this case the Hamiltonian is 
\begin{eqnarray}
H_{\mathrm{component}}=\frac{1}{2}\mathbf{p}^\top \mathbf{L}^{-1}\mathbf{p} + \frac{1}{2} \mathbf{q}\mathbf{K}\mathbf{q}
\label{eq:Ham_comp}
\end{eqnarray}
to which we can add the external driving term $H_{\mathrm{ext}}=-\mathbf{q}%
\left( t\right) ^{\top }\mathbf{v}\left( t\right) $.

\subsubsection{Semi-Infinite Transmission Lines}

The problem of including a non-zero resistance term in the circuit has a
long history. The model cannot be a closed Hamiltonian one, and instead one
must consider the circuit as an open system coupled to an infinite
environment.

Consider an infinite transmission line with $Q(z,t)$ denoting the charge at
position $z$ at time $t$. This can be modelled as a continuum of components. The current and voltage are given
respectively by 
\begin{eqnarray}
I(z,t)=\frac{\partial Q(z,t)}{\partial t},\quad V(z,t)=-\frac{1}{\kappa }%
\frac{\partial Q(z,t)}{\partial z}  \label{eq:def_I,V}
\end{eqnarray}
where $\kappa $ is the capacitance per unit length. We obtain immediately
that $\frac{\partial I}{\partial z}=-\kappa \frac{\partial V}{\partial t}$%
, while a second dynamical equation is 
\begin{eqnarray*}
\frac{\partial V}{\partial z}=-\ell \frac{\partial I}{\partial t}
\end{eqnarray*}
where we introduce the inductance $\ell $ per unit length. The two equations
together imply the wave equation 
\begin{eqnarray}
\frac{\partial ^{2}Q}{\partial z^{2}}-\frac{1}{c^{2}}\frac{\partial ^{2}Q%
}{\partial t^{2}}=0,  \label{eq:wave_eqr}
\end{eqnarray}
known as the \textit{telegrapher's equation} which is likewise satisfied by $%
I$ and $V$. The wave speed is given by $c=\frac{1}{\sqrt{\ell \kappa }}$.

The wave equation can be derived from the Lagrangian density 
\begin{eqnarray}
\mathscr{L}=\frac{1}{2}\ell \left( \frac{\partial Q}{\partial t}\right)
^{2}-\frac{1}{2\kappa }\left( \frac{\partial Q}{\partial z}\right) ^{2}
\end{eqnarray}
so that the Euler-Lagrange equations 
\begin{eqnarray*}
\frac{\partial }{\partial t}\left( \frac{\partial \mathscr{L}}{\partial
(\partial Q/\partial t)}\right) +\frac{\partial }{\partial z}\left( \frac{%
\partial \mathscr{L}}{\partial (\partial Q/\partial z)}\right) +\frac{%
\partial \mathscr{L}}{\partial Q}=0,
\end{eqnarray*}
yield (\ref{eq:wave_eqr}). The canonically conjugate field to $Q(z,t)$ is 
\begin{eqnarray}
\pi (z,t)=\frac{\partial \mathscr{L}}{\partial (\partial Q/\partial t)}%
=\ell \frac{\partial Q}{\partial t},
\end{eqnarray}
or $\pi =\ell I$, and from this we obtain the energy density 
\begin{eqnarray}
\mathscr{H}=\pi \frac{\partial Q}{\partial t}-\mathscr{L}=\frac{1}{2\ell }%
\pi ^{2}+\frac{1}{2\kappa }(\frac{\partial Q}{\partial z})^{2}.
\end{eqnarray}

We now consider $n$ transmission lines, each one connecting to one of the
terminals of the component.

The $j$th line has an associated charge distribution $Q_{j}\left(
z_{j},t\right) $ where $z_{j}\geq 0$ is the spatial coordinate along the $j$%
th line: we have $z_{j}=0$ at the $j$th terminal, and when no confusion
occurs we will often write just $z$ for the variables. The energy associated
with the transmission lines is therefore 
\begin{eqnarray*}
\mathscr{E}_{TL}\left( t\right) =\sum_{j}\int_{0}^{\infty }\left\{ \frac{%
\ell }{2}\left( \frac{\partial Q_{j}}{\partial t}\right) ^{2}+\frac{1}{%
2\kappa }\left( \frac{\partial Q_{j}}{\partial z}\right) ^{2}\right\} dz.
\end{eqnarray*}

\begin{proposition}
The transmission line energy changes at the rate 
\begin{eqnarray}
\mathscr{\dot{E}}_{TL}\left( t\right) \equiv \mathbf{I}^{\top }\left(
0,t\right) \mathbf{V}\left( 0,t\right) .  \label{eq:power_TL}
\end{eqnarray}
\end{proposition}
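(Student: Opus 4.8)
The plan is to differentiate the energy functional $\mathscr{E}_{TL}$ under the integral sign and then use the telegrapher's equation (\ref{eq:wave_eqr}) to collapse the resulting integrand into a total spatial derivative, whereupon the integral reduces to a boundary evaluation at the terminal $z=0$. First I would carry the time derivative inside the sum and the integral to obtain
\begin{eqnarray*}
\mathscr{\dot{E}}_{TL}\left( t\right) =\sum_{j}\int_{0}^{\infty }\left\{ \ell \frac{\partial Q_{j}}{\partial t}\frac{\partial ^{2}Q_{j}}{\partial t^{2}}+\frac{1}{\kappa }\frac{\partial Q_{j}}{\partial z}\frac{\partial ^{2}Q_{j}}{\partial z\,\partial t}\right\} dz,
\end{eqnarray*}
assuming sufficient regularity and decay of the fields to justify differentiation under the integral.

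The key manipulation is to eliminate the second time derivative using (\ref{eq:wave_eqr}), which yields $\frac{\partial ^{2}Q_{j}}{\partial t^{2}}=c^{2}\frac{\partial ^{2}Q_{j}}{\partial z^{2}}=\frac{1}{\ell \kappa }\frac{\partial ^{2}Q_{j}}{\partial z^{2}}$ since $c=1/\sqrt{\ell \kappa }$. Substituting this expression, the factor of $\ell$ in the first term cancels and both terms acquire the common prefactor $1/\kappa$, leaving an integrand that I would recognise, via the product rule, as the perfect spatial derivative $\frac{1}{\kappa }\frac{\partial }{\partial z}\left( \frac{\partial Q_{j}}{\partial t}\frac{\partial Q_{j}}{\partial z}\right) $. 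The $z$-integral then telescopes to the difference of its integrand evaluated at the endpoints $z\to \infty $ and $z=0$.

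The main obstacle is the boundary term at infinity: to discard it I would need to invoke the finite-energy condition, namely that $\frac{\partial Q_{j}}{\partial t}$ and $\frac{\partial Q_{j}}{\partial z}$ decay sufficiently fast as $z\to \infty $ (which is also what legitimised the differentiation under the integral in the first step). Under that hypothesis only the terminal contribution at $z=0$ survives, with a sign inherited from the lower limit, giving $\mathscr{\dot{E}}_{TL}\left( t\right) =-\sum_{j}\frac{1}{\kappa }\left. \frac{\partial Q_{j}}{\partial t}\frac{\partial Q_{j}}{\partial z}\right| _{z=0}$. Finally I would identify the two factors through the definitions (\ref{eq:def_I,V}): the terminal current is $I_{j}(0,t)=\left. \frac{\partial Q_{j}}{\partial t}\right| _{z=0}$ and the terminal voltage is $V_{j}(0,t)=-\frac{1}{\kappa }\left. \frac{\partial Q_{j}}{\partial z}\right| _{z=0}$, so that each summand is exactly $I_{j}(0,t)\,V_{j}(0,t)$ and the sum over lines assembles into the claimed inner product $\mathbf{I}^{\top }(0,t)\,\mathbf{V}(0,t)$, establishing (\ref{eq:power_TL}).
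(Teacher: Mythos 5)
Your proposal is correct and follows essentially the same route as the paper's proof: differentiate under the integral, use the telegrapher's equation to turn the integrand into the total spatial derivative $\frac{1}{\kappa }\frac{\partial }{\partial z}\bigl( \frac{\partial Q_{j}}{\partial t}\frac{\partial Q_{j}}{\partial z}\bigr)$, and evaluate at the boundary $z=0$ using the definitions of $I$ and $V$. Your explicit attention to the decay at infinity and the justification for differentiating under the integral is a welcome addition that the paper leaves implicit.
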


\begin{proof}
This follows from 
\begin{eqnarray*}
&&\frac{d}{dt}\sum_{j}\int_{0}^{\infty }\left\{ \frac{\ell }{2}\left( \frac{%
\partial Q_{j}}{\partial t}\right) ^{2}+\frac{1}{2\kappa }\left( \frac{%
\partial Q_{j}}{\partial z}\right) ^{2}\right\} dz \\
&=&\sum_{j}\int_{0}^{\infty }\left\{ \ell \frac{\partial Q_{j}}{\partial t}%
\frac{\partial ^{2}Q_{j}}{\partial t^{2}}+\frac{1}{\kappa }\frac{\partial
Q_{j}}{\partial z}\frac{\partial ^{2}Q_{j}}{\partial t\partial z}\right\} dz
\\
&=&\frac{1}{\kappa }\sum_{j}\int_{0}^{\infty }\left\{ \frac{\partial Q_{j}}{%
\partial t}\frac{\partial ^{2}Q_{j}}{\partial z^{2}}+\frac{\partial Q_{j}}{%
\partial z}\frac{\partial ^{2}Q_{j}}{\partial t\partial z}\right\} dz\quad 
\mathrm{(by (\ref{eq:wave_eqr}))} \\
&=&\frac{1}{\kappa }\sum_{j}\int_{0}^{\infty }\frac{\partial }{\partial z}%
\left\{ \frac{\partial Q_{j}}{\partial t}\frac{\partial Q_{j}}{\partial z}%
\right\} dz \\
&=&-\frac{1}{\kappa }\sum_{j}\left. \frac{\partial Q_{j}}{\partial t}%
\right| _{z=0}\left. \frac{\partial Q_{j}}{\partial z}\right| _{z=0} \\
&\equiv &\mathbf{I}^{\top }\left( 0,t\right) \mathbf{V}\left( 0,t\right) ,
\end{eqnarray*}
where we use (\ref{eq:def_I,V})).
\end{proof}

\subsubsection{Boundary Conditions}

The charge $q_{j}\left( t\right) $ at the $j$th terminal is taken to
coincide with the value of the charge at the origin of a semi-infinite
transmission line along the positive $z$-axis. In vector form, we have 
\begin{eqnarray}
\mathbf{q}(t)=\left. \mathbf{Q}(z,t)\right| _{z=0},
\label{eq:constraint}
\end{eqnarray}
and therefore 
\begin{eqnarray*}
\mathbf{i}(t)=\left. \frac{\partial }{\partial t}\mathbf{Q}(z,t)\right|
_{z=0}\equiv \mathbf{I}\left( 0,t\right) .
\end{eqnarray*}

The total energy will be 
\begin{eqnarray*}
\mathscr{E}\left( t\right) =\mathscr{E}_{\mathrm{component}}\left( t\right) +%
\mathscr{E}_{\mathrm{TL}}\left( t\right)
\end{eqnarray*}
and from (\ref{eq:power_component}) and (\ref{eq:power_TL}) we will have $%
\mathscr{E}\left( t\right) $ constant provided that 
\begin{eqnarray}
\mathbf{v}\left( t\right) \equiv -\mathbf{V}\left( 0,t\right) .
\label{eq:BC}
\end{eqnarray}
Using the circuit equation (\ref{eq:circuit_eq}), the boundary condition may
be written in terms of the fields as 
\begin{eqnarray}
\mathbf{L}\frac{\partial ^{2}\mathbf{Q}}{\partial t^{2}}\left( 0,t\right) +%
\mathbf{KQ}\left( 0,t\right) =\frac{1}{\kappa }\frac{\partial \mathbf{Q}}{%
\partial z}\left( 0,t\right) .  \label{eq:BC_Q}
\end{eqnarray}

A mechanical analogue dates back to Lamb in 1900 \cite{Lamb1900}. Here the lossy nature of the problem is in clear evidence - the energy is free to escape to infinity along the string.

\subsubsection{The Circuit Equation}

The charge densities $Q_{j}(z,t)$ satisfy the wave equation for $z>0$, we
again have the general solution (\ref{eq:wave_eqr}) which we now rewrite as 
\begin{eqnarray*}
\mathbf{Q}(z,t)=\mathbf{q}_{\mathrm{in}}(t+\frac{z}{c})+\mathbf{q}_{\mathrm{%
out}}(t-\frac{z}{c})
\end{eqnarray*}
with the $z>0$ left and right propagating parts interpreted as inputs and
outputs. In particular, 
\begin{eqnarray}
\mathbf{q}(t) = \mathbf{q}_{\mathrm{in}}(t)+\mathbf{q}_{\mathrm{out}}(t)
\label{eq:q_in_out}
\end{eqnarray}
We see that 
\begin{eqnarray*}
\mathbf{I}(z,t) &=&\frac{\partial \mathbf{Q}}{\partial t}=\mathbf{\dot{q}}_{%
\mathrm{in}}(t+\frac{z}{c})+\mathbf{\dot{q}}_{\mathrm{out}}(t-\frac{z}{c}),
\\
\mathbf{V}(z,t) &=&-\frac{1}{\kappa }\frac{\partial \mathbf{Q}}{\partial z}%
=-\frac{1}{\kappa c}\mathbf{\dot{q}}_{\mathrm{in}}(t+\frac{z}{c})+\frac{1}{%
\kappa c}\mathbf{\dot{q}}_{\mathrm{out}}(t-\frac{z}{c}),
\end{eqnarray*}
and we may eliminate $\mathbf{\dot{q}}_{\mathrm{out}}$ to get 
\begin{eqnarray*}
\mathbf{V}(z,t)=\frac{1}{\kappa c}\mathbf{I}(z,t)-\frac{2}{\kappa c}%
\mathbf{\dot{q}}_{\mathrm{in}}(t+\frac{z}{c})
\end{eqnarray*}
and so the terminal voltages $\mathbf{v}(t)=\mathbf{V}(z=0,t)$ are 
\begin{eqnarray*}
\mathbf{v}(t) &=&-\frac{1}{\kappa c}\mathbf{I}(0,t)+\frac{2}{\kappa c}%
\mathbf{\dot{q}}_{\mathrm{in}}(t) \\
&\equiv &-R\mathbf{\dot{q}}(t)+\mathbf{F}(t),
\end{eqnarray*}
where $R=\frac{1}{\kappa c}$ and we have the \lq\lq noise \rq\rq
\begin{eqnarray*}
\mathbf{F}(t)=2R\,\mathbf{\dot{q}}_{\mathrm{in}}(t).
\end{eqnarray*}
We note that the new constant $R$ is in fact $\sqrt{\frac{\ell }{\kappa }}%
=Z $, the characteristic impedance of the transmission lines. (We could obviously
generalize to different characteristics for each line.)

The equations of motion for the terminal then take the remarkable form of a Langevin equation 
\begin{eqnarray}
\mathbf{L\ddot{q}}(t)+R\mathbf{\dot{q}}(t)+\mathbf{Kq}(t)=\mathbf{F}(t),
\label{eq:Langevin}
\end{eqnarray}
which are of the form of a normal electric circuit with inductance matrix $\mathbf{L}$, capacitance matrix $\mathbf{K}$, along
with resistors $R$ and an incoming driving voltage $\mathbf{F}(t)$.

We note that 
\begin{eqnarray*}
\frac{d}{dt}\left[ 
\begin{array}{c}
\mathbf{q}(t) \\ 
\mathbf{i}(t)
\end{array}
\right] =\mathbf{A}_{-}\left[ 
\begin{array}{c}
\mathbf{q}(t) \\ 
\mathbf{i}(t)
\end{array}
\right] +2R\left[ 
\begin{array}{c}
0 \\ 
\mathbf{L}^{-1}\mathbf{\dot{q}}_{\mathrm{in}}(t)
\end{array}
\right]
\end{eqnarray*}
where 
\begin{eqnarray*}
\mathbf{A}_{-}=\left[ 
\begin{array}{cc}
0 & I_{n} \\ 
-\mathbf{L^{-1}K} & -R\mathbf{L}^{-1}
\end{array}
\right]
\end{eqnarray*}
which we assume to be Hurwitz.

We may then may integrate to get 
\begin{eqnarray}  \label{eq:integrate}
\left[ 
\begin{array}{c}
\mathbf{q}(t) \\ 
\mathbf{i}(t)
\end{array}
\right] &=&e^{\mathbf{A}_- t}\left[ 
\begin{array}{c}
\mathbf{q}(0) \\ 
\mathbf{i}(0)
\end{array}
\right] \nonumber \\
&&+2R\int_{0}^{t}e^{\mathbf{A}_- (t-\tau )}\left[ 
\begin{array}{c}
0 \\ 
\mathbf{L}^{-1}\mathbf{\dot{q}}_{\mathrm{in}}(\tau )
\end{array}
\right] d\tau  \nonumber \\
\end{eqnarray}
which implies an expression of the form 
\begin{eqnarray*}
\mathbf{i}(t) &\equiv &\mathbf{H}_{iq}^{-}\left( t\right) \mathbf{q}(0)+\mathbf{%
H}_{ii}^{-}\left( t\right) \mathbf{i}(0)\\
&& +\int_{0}^{t}\mathbf{G}^{-}\left(
t-\tau \right) \mathbf{\dot{q}}_{\mathrm{in}}(\tau ),
\end{eqnarray*}
where 
\begin{eqnarray*}
\left[ 
\begin{array}{cc}
\mathbf{H}_{qq}^{-}\left( t\right) & \mathbf{H}_{qi}^{-}\left( t\right) \\ 
\mathbf{H}_{iq}^{-}\left( t\right) & \mathbf{H}_{ii}^{-}\left( t\right)
\end{array}
\right] =e^{\mathbf{A}^{-}t}
\end{eqnarray*}
and 
\begin{eqnarray*}
\mathbf{G}^{-}\left( t\right) =2R\,\mathbf{H}_{ii}^{-}\left( t\right) 
\mathbf{L}^{-1}.
\end{eqnarray*}

If we wished to solve for negative times, then we encounter the analogue expressions
\begin{eqnarray*}
\mathbf{A}_{+}=\left[ 
\begin{array}{cc}
0 & I_{n} \\ 
-\mathbf{L^{-1}K} & +R\mathbf{L}^{-1}
\end{array}
\right] ,
\end{eqnarray*}
so that $-\mathbf{A}_{+}$ is Hurwitz, and 
\begin{eqnarray*}
\left[ 
\begin{array}{cc}
\mathbf{H}_{qq}^{+}\left( t\right) & \mathbf{H}_{qi}^{+}\left( t\right) \\ 
\mathbf{H}_{iq}^{+}\left( t\right) & \mathbf{H}_{ii}^{+}\left( t\right)
\end{array}
\right] =e^{\mathbf{A}_{+}t}
\end{eqnarray*}
and 
\begin{eqnarray*}
\mathbf{G}^{+}\left( t\right) =2R\,\mathbf{H}_{ii}^{+}\left( t\right) 
\mathbf{L}^{-1}.
\end{eqnarray*}

\subsubsection{Initial Data, Input and Output Spaces}

It is convenient to introduce the parameter $\tau =z/c$ along transmission
lines which measures the time take for a signal to travel in from a point on
the line to the corresponding terminal. Therefore $\mathbf{Q}(z,t)\equiv 
\mathbf{q}_{\mathrm{in}}(t+\tau )+\mathbf{q}_{\mathrm{out}}(t-\tau )$. Let
us introduce the \textit{time-zero field data} 
\begin{eqnarray*}
f_{j}\left( \tau \right) &=&\left. Q_{j}\left( t,c\tau \right) \right|
_{t=0}, \\
g_{j}\left( \tau \right) &=&\left. \frac{\partial Q_{j}\left( t,c\tau
\right) }{\partial t}\right| _{t=0}.
\end{eqnarray*}
(Note that these functions are only defined for $\tau \geq 0$.) We then have 
\begin{eqnarray*}
\mathbf{f}\left( \tau \right) &=&\mathbf{q}_{\mathrm{in}}\left( \tau \right)
+\mathbf{q}_{\mathrm{out}}\left( -\tau \right) , \\
\mathbf{\dot{f}}\left( \tau \right) &=&\mathbf{\dot{q}}_{\mathrm{in}}\left(
\tau \right) -\mathbf{\dot{q}}_{\mathrm{out}}\left( -\tau \right) , \\
\mathbf{g}\left( \tau \right) &=&\mathbf{\dot{q}}_{\mathrm{in}}\left( \tau
\right) +\mathbf{\dot{q}}_{\mathrm{out}}\left( -\tau \right) .
\end{eqnarray*}
and rearranging 
\begin{eqnarray}
\mathbf{\dot{q}}_{\mathrm{in}}\left( \tau \right) &=&\frac{1}{2}\mathbf{g}%
\left( \tau \right) +\frac{1}{2}\mathbf{\dot{f}}\left( \tau \right) ,\quad
\left( \tau >0\right) ;  \label{eq:dotqin_fg} \\
\mathbf{\dot{q}}_{\mathrm{out}}\left( \tau \right) &=&\frac{1}{2}\mathbf{g}%
\left( -\tau \right) -\frac{1}{2}\mathbf{\dot{f}}\left( -\tau \right) ,\quad
\left( \tau <0\right) .  \label{eq:dotqout_fg}
\end{eqnarray}

\begin{proposition}
\label{Prop:dotq_out_fg}The output field $\mathbf{\dot{q}}_{\mathrm{out}%
}\left( t\right)$  is given in terms of the
time-zero data fields by  ($-\infty <t<\infty $)
\begin{eqnarray}
\mathbf{\dot{q}}_{\mathrm{out}}\left( t\right) &=&\left\{ 
\begin{array}{l}
\mathbf{H}_{iq}^{-}\left( t\right) \mathbf{f}(0)+\mathbf{H}_{ii}^{-}\left(
t\right) \mathbf{g}(0) \\
+\frac{1}{2}\int_{0}^{t}\mathbf{G}^{-}\left( t-\tau \right) \left( \mathbf{g}%
\left( \tau \right) +\mathbf{\dot{f}}\left( \tau \right) \right) d\tau \\
-\frac{1}{2}\mathbf{g}\left( t\right) -\frac{1}{2}\mathbf{\dot{f}}(t)
, \qquad t\geq 0; \\ 
\\
\frac{1}{2}\mathbf{g}\left( -t\right) -\frac{1}{2}\mathbf{\dot{f}}\left(
-t\right) , \qquad t\leq 0.
\end{array}
\right.  \nonumber \\
&&  \label{eq:fg_to_out}
\end{eqnarray}
and similarly
\begin{eqnarray}
\mathbf{\dot{q}}_{\mathrm{in}}\left( t\right) &=&\left\{ 
\begin{array}{l}
\frac{1}{2}\mathbf{g}\left( t\right) +\frac{1}{2}\mathbf{\dot{f}}\left(
t\right) , \qquad t\geq 0; \\ 
\\
\mathbf{H}_{iq}^{+}\left( t\right) \mathbf{f}(0)+\mathbf{H}_{ii}^{+}\left(
t\right) \mathbf{g}(0)  \\
+\frac{1}{2}\int_{0}^{-t}\mathbf{G}^{+}\left( t+\sigma \right) \left( 
\mathbf{g}\left( \sigma \right) +\mathbf{\dot{f}}\left( \sigma \right)
\right) d\sigma \\
-\frac{1}{2}\mathbf{g}\left( -t\right) +\frac{1}{2}\mathbf{\dot{f}}(-t)
, \qquad t\leq 0.
\end{array}
\right.  \nonumber \\
&&  \label{eq:fg_to_in}
\end{eqnarray}
\end{proposition}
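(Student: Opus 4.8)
The plan is to split the piecewise identities \erfand{eq:fg_to_out}{eq:fg_to_in} into the two \lq\lq free'' cases, which are purely kinematic, and the two \lq\lq scattered'' cases, which require the equations of motion. The kinematic ingredients are, first, the decomposition $\mathbf{q}(t)=\mathbf{q}_{\mathrm{in}}(t)+\mathbf{q}_{\mathrm{out}}(t)$ of \erf{eq:q_in_out}, which on differentiation gives $\mathbf{i}(t)=\dot{\mathbf{q}}_{\mathrm{in}}(t)+\dot{\mathbf{q}}_{\mathrm{out}}(t)$ for every $t$; and second, the identifications $\mathbf{q}(0)=\mathbf{f}(0)$ and $\mathbf{i}(0)=\mathbf{g}(0)$, obtained by setting $\tau=0$ in the definitions of the time-zero field data together with \erf{eq:constraint}. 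With these the two matching-sign cases are immediate: for $t\le 0$ the output $\dot{\mathbf{q}}_{\mathrm{out}}(t)=\tfrac12\mathbf{g}(-t)-\tfrac12\dot{\mathbf{f}}(-t)$ is read straight off \erf{eq:dotqout_fg}, and for $t\ge 0$ the input $\dot{\mathbf{q}}_{\mathrm{in}}(t)=\tfrac12\mathbf{g}(t)+\tfrac12\dot{\mathbf{f}}(t)$ is \erf{eq:dotqin_fg}. These are the halves of the initial field data still streaming freely toward, respectively away from, the terminal, and never interact with the circuit.

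For the remaining forward case $t\ge 0$ of $\dot{\mathbf{q}}_{\mathrm{out}}$ I would write $\dot{\mathbf{q}}_{\mathrm{out}}(t)=\mathbf{i}(t)-\dot{\mathbf{q}}_{\mathrm{in}}(t)$ and insert the solved current from \erf{eq:integrate}, namely $\mathbf{i}(t)=\mathbf{H}_{iq}^{-}(t)\mathbf{q}(0)+\mathbf{H}_{ii}^{-}(t)\mathbf{i}(0)+\int_0^t\mathbf{G}^{-}(t-\tau)\dot{\mathbf{q}}_{\mathrm{in}}(\tau)\,d\tau$. Substituting $\mathbf{q}(0)=\mathbf{f}(0)$, $\mathbf{i}(0)=\mathbf{g}(0)$, and — since the whole integration range $\tau\in[0,t]$ is positive — the forward input $\dot{\mathbf{q}}_{\mathrm{in}}(\tau)=\tfrac12(\mathbf{g}(\tau)+\dot{\mathbf{f}}(\tau))$ from \erf{eq:dotqin_fg}, then subtracting the same $\dot{\mathbf{q}}_{\mathrm{in}}(t)$, reproduces the top line of \erf{eq:fg_to_out} term by term.

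The backward case $t\le 0$ of $\dot{\mathbf{q}}_{\mathrm{in}}$ is the only one needing genuinely new input, because for $t<0$ the current is no longer driven by $\dot{\mathbf{q}}_{\mathrm{in}}$ (now unknown) but by the outgoing field, which for $t<0$ is the free data. I would re-derive the Langevin equation eliminating $\dot{\mathbf{q}}_{\mathrm{in}}$ in favour of $\dot{\mathbf{q}}_{\mathrm{out}}$; using the boundary condition \erf{eq:BC} this yields $\mathbf{L}\ddot{\mathbf{q}}-R\dot{\mathbf{q}}+\mathbf{Kq}=-2R\,\dot{\mathbf{q}}_{\mathrm{out}}(t)$, i.e.\ the state-space system generated by $\mathbf{A}_{+}$ with forcing $-2R\,[\,0,\ \mathbf{L}^{-1}\dot{\mathbf{q}}_{\mathrm{out}}(t)\,]^{\top}$. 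Integrating this from $0$ to $t$ (legitimate for either sign of $t$, and the decaying solution as $t\to-\infty$ because $-\mathbf{A}_{+}$ is Hurwitz) gives $\mathbf{i}(t)=\mathbf{H}_{iq}^{+}(t)\mathbf{f}(0)+\mathbf{H}_{ii}^{+}(t)\mathbf{g}(0)-\int_0^t\mathbf{G}^{+}(t-\tau)\dot{\mathbf{q}}_{\mathrm{out}}(\tau)\,d\tau$. Writing $\dot{\mathbf{q}}_{\mathrm{in}}(t)=\mathbf{i}(t)-\dot{\mathbf{q}}_{\mathrm{out}}(t)$ with the $t\le0$ value of $\dot{\mathbf{q}}_{\mathrm{out}}$ from \erf{eq:dotqout_fg}, and performing the time reversal $\sigma=-\tau$ in the convolution, collapses the expression onto the lower line of \erf{eq:fg_to_in}.

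The step I expect to be delicate is precisely this last change of variables. Under $\sigma=-\tau$ the integration range moves to $\sigma\in[0,-t]$ with $-t>0$, so one must substitute the backward output $\dot{\mathbf{q}}_{\mathrm{out}}(-\sigma)=\tfrac12(\mathbf{g}(\sigma)-\dot{\mathbf{f}}(\sigma))$, whose relative sign on $\dot{\mathbf{f}}$ is opposite to that of the forward input used in the $t\ge0$ computation; keeping this sign, together with the sign flip of $d\tau$ and of the $-2R$ forcing, straight is the only real bookkeeping hazard, and it is the point I would check most carefully against the stated form. Conceptually the proposition is the causal statement that the scattered halves of the two fields — $\dot{\mathbf{q}}_{\mathrm{out}}$ in the future and $\dot{\mathbf{q}}_{\mathrm{in}}$ in the past — are obtained by propagating the circuit dynamics forward (resp.\ backward) from the time-zero data, while the complementary halves ride rigidly on the free wave. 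As a final consistency check I would verify that the two branches of each formula agree at $t=0$: there the convolution vanishes and $e^{\mathbf{A}_{\pm}\cdot 0}=I$ forces $\mathbf{H}^{\pm}_{iq}(0)=0$, $\mathbf{H}^{\pm}_{ii}(0)=I$, so both branches reduce to the common kinematic value $\tfrac12\mathbf{g}(0)\mp\tfrac12\dot{\mathbf{f}}(0)$, confirming continuity.
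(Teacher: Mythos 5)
Your treatment of the two free branches and of the $t\ge 0$ output branch is precisely the paper's own proof: read the $t\le 0$ output off \erf{eq:dotqout_fg}, and for $t\ge 0$ write $\dot{\mathbf{q}}_{\mathrm{out}}=\mathbf{i}-\dot{\mathbf{q}}_{\mathrm{in}}$, insert \erf{eq:integrate}, identify $\mathbf{q}(0)=\mathbf{f}(0)$, $\mathbf{i}(0)=\mathbf{g}(0)$, and substitute \erf{eq:dotqin_fg}. Where you go beyond the paper is the backward input branch, which the paper dispatches with the single sentence that ``a similar argument'' applies; you actually construct that argument, eliminating $\dot{\mathbf{q}}_{\mathrm{in}}$ in favour of $\dot{\mathbf{q}}_{\mathrm{out}}$ to obtain the $\mathbf{A}_{+}$ system with forcing $-2R\,[\,0,\ \mathbf{L}^{-1}\dot{\mathbf{q}}_{\mathrm{out}}\,]^{\top}$ --- which is evidently why the paper introduces $\mathbf{A}_{+}$, $\mathbf{H}^{+}$ and $\mathbf{G}^{+}$ in the first place. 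Your bookkeeping at the step you flag as delicate is correct, and you should trust it rather than defer to the printed statement: carried to completion, the convolution term for $t\le 0$ comes out as
\[
\frac{1}{2}\int_{0}^{-t}\mathbf{G}^{+}\left( t+\sigma \right)\left( \mathbf{g}\left( \sigma \right)-\mathbf{\dot{f}}\left( \sigma \right)\right)d\sigma ,
\]
with a \emph{minus} sign on $\mathbf{\dot{f}}$, whereas \erf{eq:fg_to_in} prints a plus; this is a typographical slip in the statement, not a flaw in your derivation. Two independent checks confirm your sign. First, your $t=0$ consistency test only verifies continuity (the integral vanishes there for either sign); pushing it to first derivatives, the two branches of \erf{eq:fg_to_in} are $C^{1}$ at $t=0$ only if $\mathbf{L}\mathbf{\ddot{f}}(0)+\mathbf{K}\mathbf{f}(0)+\epsilon R\mathbf{\dot{f}}(0)=0$, where $\epsilon$ is the sign on $\mathbf{\dot{f}}$ in the integrand, while the compatibility condition that the boundary condition \erf{eq:BC_Q} imposes on the initial data at $t=0$ is $\mathbf{L}\mathbf{\ddot{f}}(0)+\mathbf{K}\mathbf{f}(0)=R\mathbf{\dot{f}}(0)$, forcing $\epsilon=-1$ (the analogous check on \erf{eq:fg_to_out}, which carries the $+$ sign, is consistent). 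Second, the time reversal $t\mapsto -t$ swaps input and output, sends $\mathbf{g}\mapsto -\mathbf{g}$ while fixing $\mathbf{f}$, and conjugates $-\mathbf{A}_{-}$ into $\mathbf{A}_{+}$ by $\mathrm{diag}(I,-I)$, so the backward input formula is forced by the forward output formula and again carries $\mathbf{g}-\mathbf{\dot{f}}$. In short: same method as the paper where the paper supplies one, a correctly executed argument where it does not, and a genuine (minor) correction to the stated result.
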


\begin{proof}
From the identity (\ref{eq:q_in_out}) and the expression (\ref{eq:integrate}%
) we have for $t>0$%
\begin{eqnarray*}
\mathbf{\dot{q}}_{\mathrm{out}}\left( t\right) &=&\mathbf{i}(t)-\mathbf{\dot{%
q}}_{\mathrm{in}}(t) \\
&=&\mathbf{H}_{iq}^{-}\left( t\right) \mathbf{q}(0)+\mathbf{H}%
_{ii}^{-}\left( t\right) \mathbf{i}(0) \\
&&+\int_{0}^{t}\mathbf{G}^{-}\left( t-\tau \right) \mathbf{\dot{q}}_{\mathrm{%
in}}(\tau )d\tau -\mathbf{\dot{q}}_{\mathrm{in}}(t) \\
&\equiv &\mathbf{H}_{iq}^{-}\left( t\right) \mathbf{f}(0)+\mathbf{H}%
_{ii}^{-}\left( t\right) \mathbf{g}(0) \\
&&+\frac{1}{2}\int_{0}^{t}\mathbf{G}^{-}\left( t-\tau \right) \left( \mathbf{%
g}\left( \tau \right) +\mathbf{\dot{f}}\left( \tau \right) \right) d\tau \\
&&-\frac{1}{2}\mathbf{g}\left( t\right) -\frac{1}{2}\mathbf{\dot{f}}(t).
\end{eqnarray*}
and observing that $\mathbf{q}(0)=\mathbf{f}(0)$, $\mathbf{i}(0)=\mathbf{g}%
(0)$, and substituting in for $\mathbf{\dot{q}}_{\mathrm{in}}(\tau )$ in
terms of $\mathbf{g}\left( \tau \right) $ and $\mathbf{\dot{f}}\left( \tau
\right) $ using (\ref{eq:dotqin_fg}) yields the form for $t\geq 0$. For $t<0$%
, we have (\ref{eq:dotqout_fg}).

A similar argument leads to input $\mathbf{\dot{q}}_{\mathrm{in}}$ in terms
of the initial field data. 
\end{proof}

\bigskip

The energy in the transmission line may be rewritten in terms of the initial
field data as 
\begin{eqnarray*}
\mathscr{E}_{TL}\left( 0\right) =\frac{1}{2\kappa c}\sum_{j}\int_{0}^{%
\infty }\left\{ g_{j}\left( \tau \right) +\dot{f}_{j}\left( \tau \right)
^{2}\right\} d\tau
\end{eqnarray*}
and similarly the energy stored in the circuit is 
\begin{eqnarray*}
\mathscr{E}_{\mathrm{component}}\left( 0\right) =\frac{1}{2}\mathbf{g}\left(
0\right) ^{\top }\mathbf{Lg}\left( 0\right) +\frac{1}{2}\mathbf{f}\left(
0\right) ^{\top }\mathbf{Kf}\left( 0\right) .
\end{eqnarray*}

\begin{definition}
Consider the space of $n$ pairs of smooth compact-supported functions $%
\left( \mathbf{f},\mathbf{g}\right) $ on the domain $[0,\infty )$ with
``energy norm'' 
\begin{eqnarray*}
\mathscr{E}\left( \mathbf{f},\mathbf{g}\right) &=&\frac{1}{2}\mathbf{g}%
\left( 0\right) ^{\top }\mathbf{Lg}\left( 0\right) +\frac{1}{2}\mathbf{f}%
\left( 0\right) ^{\top }\mathbf{Kf}\left( 0\right) \\
&&+\frac{1}{2\kappa c}\int_{0}^{\infty }\left\{ \mathbf{g}\left( \tau
\right) ^{\top }\mathbf{g}\left( \tau \right) +\mathbf{\dot{f}}\left( \tau
\right) ^{\top }\mathbf{\dot{f}}\left( \tau \right) \right\} d\tau .
\end{eqnarray*}
The Hilbert space completion of the space with respect to the energy norm is
referred to as the initial data Hilbert space $\mathfrak{H}_{0}$. The energy
norm will then define the norm $\left\| \left( \mathbf{f},\mathbf{g}\right)
\right\| _{\mathscr{E}}^{2}=\kappa c\mathscr{E}\left( \mathbf{f},\mathbf{g}%
\right) $ (and the inner product!) in the obvious way.
\end{definition}

\begin{proposition}
\label{Proposition:energy_0_in_out}The total energy $\mathscr{E}=\mathscr{E}%
_{\mathrm{component}}+\mathscr{E}_{\mathrm{TL}}$ is given by 
\begin{eqnarray*}
\mathscr{E} &=&\frac{1}{\kappa c}\int_{-\infty }^{\infty }\mathbf{\dot{q}}_{%
\mathrm{out}}\left( \tau \right) ^{\top }\mathbf{\dot{q}}_{\mathrm{out}%
}\left( \tau \right) d\tau \\
&=&\frac{1}{\kappa c}\int_{-\infty }^{\infty }\mathbf{\dot{q}}_{\mathrm{in}%
}\left( \tau \right) ^{\top }\mathbf{\dot{q}}_{\mathrm{in}}\left( \tau
\right) d\tau .
\end{eqnarray*}
\end{proposition}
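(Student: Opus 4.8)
The plan is to compute the transmission-line energy $\mathscr{E}_{TL}(t)$ at an \emph{arbitrary} time $t$ directly in terms of the travelling-wave components $\dot{\mathbf{q}}_{\mathrm{in}}$ and $\dot{\mathbf{q}}_{\mathrm{out}}$, and then to exploit conservation of the total energy $\mathscr{E}=\mathscr{E}_{\mathrm{component}}(t)+\mathscr{E}_{TL}(t)$ by sending $t\to\pm\infty$. Since the boundary condition (\ref{eq:BC}) was imposed precisely to make $\mathscr{\dot E}=0$, the total energy is time-independent, so both limits must return the same value $\mathscr{E}$; I will show that the $t\to+\infty$ limit produces the outgoing integral and the $t\to-\infty$ limit the incoming one.

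First I would substitute the d'Alembert form $\mathbf{Q}(z,t)=\mathbf{q}_{\mathrm{in}}(t+z/c)+\mathbf{q}_{\mathrm{out}}(t-z/c)$ into the spatial energy density. Writing $\partial_t\mathbf{Q}=\dot{\mathbf{q}}_{\mathrm{in}}+\dot{\mathbf{q}}_{\mathrm{out}}$ and $\partial_z\mathbf{Q}=\tfrac1c(\dot{\mathbf{q}}_{\mathrm{in}}-\dot{\mathbf{q}}_{\mathrm{out}})$ and invoking the constitutive identity $\tfrac{1}{\kappa c^{2}}=\ell$, the magnetic and electric contributions combine so that the cross terms $\dot{\mathbf{q}}_{\mathrm{in}}^{\top}\dot{\mathbf{q}}_{\mathrm{out}}$ cancel, leaving the density $\ell\,[\,\dot{\mathbf{q}}_{\mathrm{in}}(t+z/c)^{\top}\dot{\mathbf{q}}_{\mathrm{in}}(t+z/c)+\dot{\mathbf{q}}_{\mathrm{out}}(t-z/c)^{\top}\dot{\mathbf{q}}_{\mathrm{out}}(t-z/c)\,]$. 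Changing variables $\tau=t+z/c$ in the incoming piece and $\sigma=t-z/c$ in the outgoing piece (each Jacobian supplying a factor $c$, with $\ell c=\tfrac{1}{\kappa c}$) gives the key identity
\[
\mathscr{E}_{TL}(t)=\frac{1}{\kappa c}\left[\int_{t}^{\infty}\dot{\mathbf{q}}_{\mathrm{in}}(\tau)^{\top}\dot{\mathbf{q}}_{\mathrm{in}}(\tau)\,d\tau+\int_{-\infty}^{t}\dot{\mathbf{q}}_{\mathrm{out}}(\sigma)^{\top}\dot{\mathbf{q}}_{\mathrm{out}}(\sigma)\,d\sigma\right].
\]
As a consistency check, setting $t=0$ and inserting (\ref{eq:dotqin_fg})--(\ref{eq:dotqout_fg}) reproduces the stated $\mathscr{E}_{TL}(0)=\tfrac{1}{2\kappa c}\int_0^\infty\{\mathbf{g}^{\top}\mathbf{g}+\dot{\mathbf{f}}^{\top}\dot{\mathbf{f}}\}\,d\tau$, the cross terms $\pm\mathbf{g}^{\top}\dot{\mathbf{f}}$ cancelling between the two half-lines.

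Next I would send $t\to+\infty$. Because the time-zero data $(\mathbf{f},\mathbf{g})$ is compactly supported on $[0,\infty)$, the incoming field $\dot{\mathbf{q}}_{\mathrm{in}}(\tau)=\tfrac12(\mathbf{g}(\tau)+\dot{\mathbf{f}}(\tau))$ vanishes for $\tau$ beyond some $T$, so $\int_t^\infty\dot{\mathbf{q}}_{\mathrm{in}}^{\top}\dot{\mathbf{q}}_{\mathrm{in}}\to 0$, while $\int_{-\infty}^t\dot{\mathbf{q}}_{\mathrm{out}}^{\top}\dot{\mathbf{q}}_{\mathrm{out}}\to\int_{-\infty}^\infty\dot{\mathbf{q}}_{\mathrm{out}}^{\top}\dot{\mathbf{q}}_{\mathrm{out}}$. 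For the component, once the drive has switched off the state evolves under the homogeneous \emph{Hurwitz} generator $\mathbf{A}_{-}$, so by (\ref{eq:integrate}) $(\mathbf{q}(t),\mathbf{i}(t))\to 0$ and hence $\mathscr{E}_{\mathrm{component}}(t)\to0$. Conservation then yields $\mathscr{E}=\tfrac{1}{\kappa c}\int_{-\infty}^{\infty}\dot{\mathbf{q}}_{\mathrm{out}}^{\top}\dot{\mathbf{q}}_{\mathrm{out}}\,d\tau$. Sending $t\to-\infty$ and running the mirror-image argument—now $\dot{\mathbf{q}}_{\mathrm{out}}(\tau)=\tfrac12(\mathbf{g}(-\tau)-\dot{\mathbf{f}}(-\tau))$ is compactly supported on $(-\infty,0]$ and the backward evolution is governed by $\mathbf{A}_{+}$ with $-\mathbf{A}_{+}$ Hurwitz (cf.\ Proposition \ref{Prop:dotq_out_fg})—gives the companion identity in terms of $\dot{\mathbf{q}}_{\mathrm{in}}$.

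The main obstacle is the limiting step rather than the algebra: one must justify that $\mathscr{E}_{\mathrm{component}}(t)\to0$ at both temporal ends and that the relevant field tails are integrable. Both rest on the stability hypotheses ($\mathbf{A}_{-}$ and $-\mathbf{A}_{+}$ Hurwitz) together with the compact support of the initial data, which guarantees the forcing is eventually absent so that the exponential decay of $e^{\mathbf{A}_{-}t}$ (resp.\ $e^{\mathbf{A}_{+}t}$ as $t\to-\infty$) controls the component state. The cross-term cancellation in the energy density—physically the statement that incoming and outgoing waves carry energy independently—is the single computation to perform with care, but it is purely algebraic once $\tfrac{1}{\kappa c^{2}}=\ell$ is used.
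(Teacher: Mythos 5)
Your argument is correct, and its algebraic core --- the cancellation of the cross terms $\dot{\mathbf{q}}_{\mathrm{in}}^{\top}\dot{\mathbf{q}}_{\mathrm{out}}$ in the d'Alembert decomposition of the line energy via $\ell = 1/(\kappa c^{2})$ --- is exactly the computation the paper performs. Where you diverge is in how the component energy is disposed of. The paper works entirely at $t=0$: it writes $\mathscr{E}=\mathscr{E}_{\mathrm{component}}(0)+\frac{1}{\kappa c}\bigl[\int_{0}^{\infty}\dot{\mathbf{q}}_{\mathrm{in}}^{\top}\dot{\mathbf{q}}_{\mathrm{in}}+\int_{-\infty}^{0}\dot{\mathbf{q}}_{\mathrm{out}}^{\top}\dot{\mathbf{q}}_{\mathrm{out}}\bigr]$, then uses the boundary condition (\ref{eq:BC_Q}) to express $\mathscr{\dot{E}}_{\mathrm{component}}$ as the flux $\frac{1}{\kappa c}(\dot{\mathbf{q}}_{\mathrm{in}}+\dot{\mathbf{q}}_{\mathrm{out}})^{\top}(\dot{\mathbf{q}}_{\mathrm{in}}-\dot{\mathbf{q}}_{\mathrm{out}})$ and integrates this over $[0,\infty)$ to trade the future-input integral for the future-output integral plus $\mathscr{E}_{\mathrm{component}}(0)$. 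You instead establish the identity $\mathscr{E}_{TL}(t)=\frac{1}{\kappa c}\bigl[\int_{t}^{\infty}\dot{\mathbf{q}}_{\mathrm{in}}^{\top}\dot{\mathbf{q}}_{\mathrm{in}}+\int_{-\infty}^{t}\dot{\mathbf{q}}_{\mathrm{out}}^{\top}\dot{\mathbf{q}}_{\mathrm{out}}\bigr]$ at arbitrary $t$ and let $t\to\pm\infty$. The two are really the same identity read differently (the paper's integrated flux balance is precisely $\lim_{t\to\infty}\mathscr{E}_{\mathrm{component}}(t)-\mathscr{E}_{\mathrm{component}}(0)$, and both arguments need $\mathscr{E}_{\mathrm{component}}(t)\to 0$, which the paper assumes silently when it drops the boundary term at $t=\infty$ and you justify explicitly from the Hurwitz property and compact support). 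What your packaging buys is symmetry: the second identity, which the paper dismisses as ``similarly established,'' follows by the mirror-image limit $t\to-\infty$ with $-\mathbf{A}_{+}$ Hurwitz, with no new computation.
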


\begin{proof}
As $\mathscr{E}$ is constant it suffices to look at the time zero expression 
\begin{eqnarray*}
\mathscr{E} &=&\frac{1}{2}\left( \mathbf{\dot{q}}_{\mathrm{in}}\left(
0\right) +\mathbf{\dot{q}}_{\mathrm{out}}\left( 0\right) \right) ^{\top }%
\mathbf{L}\left( \mathbf{\dot{q}}_{\mathrm{in}}\left( 0\right) +\mathbf{\dot{%
q}}_{\mathrm{out}}\left( 0\right) \right) \\
&&+\frac{1}{2}\left( \mathbf{q}_{\mathrm{in}}\left( 0\right) +\mathbf{q}_{%
\mathrm{out}}\left( 0\right) \right) ^{\top }\mathbf{K}\left( \mathbf{q}_{%
\mathrm{in}}\left( 0\right) +\mathbf{q}_{\mathrm{out}}\left( 0\right) \right)
\\
&&+\sum_{j}\int_{0}^{\infty }\bigg\{\frac{\ell }{2}\left( \dot{q}_{\mathrm{in%
},j}\left( \frac{z}{c}\right) +\dot{q}_{\mathrm{out},j}\left( -\frac{z}{c}%
\right) \right) ^{2} \\
&&+\frac{1}{2\kappa c^{2}}\left( \dot{q}_{\mathrm{in},j}\left( \frac{z}{c}%
\right) -\dot{q}_{\mathrm{out},j}\left( -\frac{z}{c}\right) \right) ^{2}%
\bigg\}dz
\end{eqnarray*}
and the integral term simplifies to 
\begin{eqnarray*}
\frac{\ell c}{2}\sum_{j}\int_{0}^{\infty }\dot{q}_{\mathrm{in},j}\left( \tau
\right) ^{2}d\tau +\frac{\ell c}{2}\sum_{j}\int_{-\infty }^{0}\dot{q}_{%
\mathrm{out},j}\left( \tau \right) ^{2}d\tau .
\end{eqnarray*}
Now the boundary condition (\ref{eq:BC_Q}) is 
\begin{eqnarray*}
&&\mathbf{L}\left( \mathbf{\ddot{q}}_{\mathrm{in}}\left( t\right) +\mathbf{%
\ddot{q}}_{\mathrm{out}}\left( t\right) \right) +\mathbf{K}\left( \mathbf{q}%
_{\mathrm{in}}\left( t\right) +\mathbf{q}_{\mathrm{out}}\left( t\right)
\right) \\
&=&\frac{1}{\kappa c}\left( \mathbf{\dot{q}}_{\mathrm{in}}\left( t\right) -%
\mathbf{\dot{q}}_{\mathrm{out}}\left( t\right) \right)
\end{eqnarray*}

This implies 
\begin{eqnarray*}
&&\frac{d}{dt}\bigg\{\frac{1}{2}\left( \mathbf{\dot{q}}_{\mathrm{in}}\left(
t\right) +\mathbf{\dot{q}}_{\mathrm{out}}\left( t\right) \right) ^{\top }%
\mathbf{L}\left( \mathbf{\dot{q}}_{\mathrm{in}}\left( t\right) +\mathbf{\dot{%
q}}_{\mathrm{out}}\left( t\right) \right) \\
&&+\frac{1}{2}\left( \mathbf{q}_{\mathrm{in}}\left( t\right) +\mathbf{q}_{%
\mathrm{out}}\left( t\right) \right) ^{\top }\mathbf{K}\left( \mathbf{q}_{%
\mathrm{in}}\left( t\right) +\mathbf{q}_{\mathrm{out}}\left( t\right)
\right) \bigg\} \\
&=&\frac{1}{\kappa c}\left( \mathbf{\dot{q}}_{\mathrm{in}}\left( t\right) +%
\mathbf{\dot{q}}_{\mathrm{out}}\left( t\right) \right) ^{\top }\left( 
\mathbf{\dot{q}}_{\mathrm{in}}\left( t\right) -\mathbf{\dot{q}}_{\mathrm{out}%
}\left( t\right) \right)
\end{eqnarray*}
and integrating over the range $0\leq t<\infty $ and using the integration
by parts formula yields 
\begin{eqnarray*}
&&-\frac{1}{2}\left( \mathbf{\dot{q}}_{\mathrm{in}}\left( 0\right) +\mathbf{%
\dot{q}}_{\mathrm{out}}\left( 0\right) \right) ^{\top }\mathbf{L}\left( 
\mathbf{\dot{q}}_{\mathrm{in}}\left( 0\right) +\mathbf{\dot{q}}_{\mathrm{out}%
}\left( 0\right) \right) \\
&&-\frac{1}{2}\left( \mathbf{q}_{\mathrm{in}}\left( t\right) +\mathbf{q}_{%
\mathrm{out}}\left( 0\right) \right) ^{\top }\mathbf{K}\left( \mathbf{q}_{%
\mathrm{in}}\left( 0\right) +\mathbf{q}_{\mathrm{out}}\left( 0\right) \right)
\\
&=&\frac{1}{\kappa c}\sum_{j}\int_{0}^{\infty }\dot{q}_{\mathrm{in},j}\left(
\tau \right) ^{2}d\tau -\frac{1}{\kappa c}\sum_{j}\int_{0}^{\infty }\dot{q}_{%
\mathrm{out},j}\left( \tau \right) ^{2}d\tau
\end{eqnarray*}
and substituting in gives 
\begin{eqnarray*}
\mathscr{E} &=& \sum_{j} \left( \frac{1}{\kappa c} \int_{0}^{\infty }
+ \ell c \int_{-\infty}^{0} \right)
\dot{q}_{\mathrm{out},j} \left( \tau \right) ^{2}d\tau  \\
&=&\frac{1}{\kappa c}\int_{-\infty }^{\infty }\mathbf{\dot{q}}_{\mathrm{out}%
}\left( \tau \right) ^{\top }\mathbf{\dot{q}}_{\mathrm{out}}\left( \tau
\right) d\tau .
\end{eqnarray*}
The second form, $\mathscr{E}=\frac{1}{\kappa c}\int_{-\infty }^{\infty }%
\mathbf{q}_{\mathrm{in}}\left( \tau \right) ^{\top }\mathbf{q}_{\mathrm{in}%
}\left( \tau \right) d\tau $ is similarly established.
\end{proof}

\subsubsection{Input and Output Hilbert Spaces}

Let us define the \textit{Output Hilbert Space} to be Hilbert space $%
\mathfrak{H}_{\mathrm{out}}$\ of $n$ real-valued functions $\mathbf{\dot{q}}%
_{\mathrm{out}}$ on $\left( -\infty ,\infty \right) $ with norm $\left\| 
\mathbf{\dot{q}}_{\mathrm{out}}\right\| _{\mathrm{out}}^{2}=\int_{-\infty
}^{\infty }\mathbf{\dot{q}}_{\mathrm{out}}\left( \tau \right) ^{\top }%
\mathbf{\dot{q}}_{\mathrm{out}}\left( \tau \right) d\tau $. The mapping in
Proposition \ref{Prop:dotq_out_fg} given by (\ref{eq:fg_to_out}) is
therefore a mapping 
\begin{eqnarray*}
\mathscr{V}_{\mathrm{out}} :\mathfrak{H}_{0}\mapsto \mathfrak{H}_{\mathrm{out}} 
:\left( \mathbf{f},\mathbf{g}\right) \mapsto \mathbf{\dot{q}}_{\mathrm{out}}
\end{eqnarray*}
and by Proposition \ref{Proposition:energy_0_in_out} $\mathscr{V}_{\mathrm{%
out}}$ is unitary. 

Likewise we can define an \textit{Input Hilbert Space} $%
\mathfrak{H}_{\mathrm{in}}$\ of $n$ real-valued functions $\mathbf{\dot{q}}_{%
\mathrm{in}}$ on $\left( -\infty ,\infty \right) $ with norm $\left\| 
\mathbf{\dot{q}}_{\mathrm{in}}\right\| _{\mathrm{in}}^{2}=\int_{-\infty
}^{\infty }\mathbf{\dot{q}}_{\mathrm{in}}\left( \tau \right) ^{\top }\mathbf{%
\dot{q}}_{\mathrm{in}}\left( \tau \right) d\tau $, so that (\ref{eq:fg_to_in}%
) determines a unitary 
\begin{eqnarray*}
\mathscr{V}_{\mathrm{in}} :\mathfrak{H}_{0}\mapsto \mathfrak{H}_{\mathrm{in%
}} :\left( \mathbf{f},\mathbf{g}\right) \mapsto \mathbf{\dot{q}}_{\mathrm{in}}
\end{eqnarray*}
Combining the two gives the unitary $\mathscr{S}:\mathfrak{H}_{\mathrm{in}%
}\mapsto \mathfrak{H}_{\mathrm{out}}$%
\begin{eqnarray*}
\mathscr{S}=\mathscr{V}_{\mathrm{in}}^{\ast }\mathscr{V}_{\mathrm{out}}.
\end{eqnarray*}

\subsubsection{The Input-Output Relations}

We may write the Langevin equation (\ref{eq:Langevin}) in the frequency
domain as 
\begin{eqnarray*}
\mathbf{\hat{q}}(\omega )=\mathbf{\alpha }_{-}(\omega )\mathbf{\hat{F}}%
(\omega )
\end{eqnarray*}
where we introduce the susceptibility of the circuit as 
\begin{eqnarray*}
\mathbf{\alpha }_{-}(\omega )=\left[ -\mathbf{L}\omega ^{2}-iR\omega +%
\mathbf{K}\right] ^{-1}
\end{eqnarray*}

\begin{remark}
In the case of a single $LCR$ circuit with $\omega _{0}=\frac{1}{\sqrt{LC}}$
the natural frequency of the circuit, and $\gamma =\frac{R}{L}$ the circuit
damping constant, we have the scalar susceptibility $\alpha \left( \omega
\right) =\frac{1}{L}\frac{1}{\left( \omega _{0}^{2}-\omega ^{2}\right)
-i\gamma \omega }$. The poles of the susceptibility are at $\omega _{\pm }=i%
\frac{1}{2}\gamma \pm \sqrt{\omega _{0}^{2}-\frac{1}{4}\gamma ^{2}}$ and
we note that we always have damping as $\mathrm{Im}\,\omega _{\pm }<0$.
\end{remark}

Starting from the constraint relation (\ref{eq:q_in_out}) we have $\mathbf{%
\hat{q}}_{\mathrm{out}}(\omega )=\mathbf{\hat{q}}(\omega )-\mathbf{\hat{q}}_{%
\mathrm{in}}(\omega )$ or
\begin{eqnarray}
\mathbf{\hat{q}}_{\mathrm{out}}(\omega )=\mathbf{G}(\omega )\,\mathbf{\hat{q}%
}_{\mathrm{in}}(\omega ),\qquad \left( \omega \in \mathbb{R}\right) 
\label{eq:Q_scatter}
\end{eqnarray}
where the scattering matrix is given by
\begin{eqnarray}
\mathbf{G}(\omega )&=&2i\omega R\mathbf{\alpha }_-(\omega )-1\nonumber \\
&=& - \left( -%
\mathbf{L}\omega ^{2}+iR\omega +\mathbf{K}\right)  \left( -\mathbf{L}\omega
^{2}-iR\omega +\mathbf{K}\right)^{-1 }\nonumber \\
&=&- \mathbf{\alpha }_{-}(\omega ) \, \mathbf{\alpha }_{+}(\omega )^{-1}
\end{eqnarray}
where $\mathbf{\alpha }_{+}(\omega )=\mathbf{\alpha }_{-}(\omega )^{\ast }$.
In particular, $\mathbf{G}(\omega )$ is unitary for all real values $\omega $%
.

We may represent the functions $\mathbf{q}_{\mathrm{in}}$ and $\mathbf{q}_{%
\mathrm{out}}$ as 

\begin{eqnarray}
\mathbf{q}_{\mathrm{in}}(t ) \hspace{-0.2cm} &\equiv& \hspace{-0.3cm}
\int_{0}^{\infty } 
\sqrt{\frac{\hbar }{4\pi Z \omega}}
\left[ \mathbf{a}_{\mathrm{in}%
}(\omega )e^{-i\omega t }+\mathbf{a}_{\mathrm{in}}(\omega )^{\ast
}e^{i\omega t }\right] d\omega , \nonumber\\
\mathbf{q}_{\mathrm{out}}(t ) \hspace{-0.2cm} &\equiv  &  \hspace{-0.3cm}
\int_{0}^{\infty } 
\sqrt{\frac{\hbar }{4\pi Z \omega}}
\left[ \mathbf{a}_{\mathrm{out}%
}(\omega )e^{-i\omega t }+\mathbf{a}_{\mathrm{out}}(\omega )^{\ast
}e^{i\omega t }\right] d\omega , \nonumber\\
 \label{eq:free a}
\end{eqnarray}
with characteristic impedance $Z=\sqrt{\frac{\ell }{\kappa }}$ ohms. The
input-output relation (\ref{eq:Q_scatter}) therefore implies that
\begin{eqnarray}
\mathbf{a}_{\mathrm{out}}(\omega )=\mathbf{G}(\omega )\,\mathbf{a}_{\mathrm{%
in}}(\omega ),\qquad \left( \omega >0\right) .
\end{eqnarray}
It follows that
\begin{eqnarray*}
\mathscr{E} &=&\frac{1}{\kappa c}\int_{-\infty }^{\infty }\mathbf{\hat{\dot{q}}}_{\mathrm{out}}\left( \omega \right) ^{\top }
\mathbf{\hat{\dot{q}}}_{\mathrm{out}}  \left( \omega \right) d\omega \\
&=&\frac{1}{\kappa c}\int_{-\infty }^{\infty } \mathbf{\hat{\dot{q}}}_{\mathrm{in}}
\left( \omega \right) ^{\top } \mathbf{\hat{\dot{q}}}_{\mathrm{in}} \left( \omega
\right) d\omega ,
\end{eqnarray*}
which is in agreement with Proposition \ref{Proposition:energy_0_in_out} by the Plancherel theorem.

\subsection{Quantization}
The lossless circuit model is easily quantized by imposing the canonical commutation relations 
\begin{eqnarray}
\left[ q_{j},p_{k}\right] =i\hbar \;\delta _{jk}.
\label{eq:CCR_qp}
\end{eqnarray}
In particular, the Hamiltonian (\ref{eq:Ham_comp}) can now be interpreted as an operator and written in the form
\begin{eqnarray*}
H_{\mathrm{component}}\equiv \sum_{k}\hbar \omega _{k}\left( a_{k}^{\ast
}a_{k}+\frac{1}{2}\right)
\end{eqnarray*}
where we introduce normal modes satisfying $\left[ a_{j},a_{k}^{\ast }\right]
=\delta _{jk}$. 
\begin{remark}
In the $n=1$ case we have an inductance $L$ henries ($\texttt{Ns}^{2}\,
\texttt{C}^{-2}$) and a capacitance $C$ farads ($\texttt{C}^{2} \texttt{N}^{-1}$), but no resistance (%
$R=0$). Introducing the impedance $Z_{0}=\sqrt{\frac{L}{C}}$, which has
units of ohms ($\texttt{NsC}^{-2}$), we have annihilation operator 
\begin{eqnarray*}
a=\left( 2\hbar Z_{0}\right) ^{-1/2}\left( Z_{0}q+ip\right)
\end{eqnarray*}
so that $\left[ a,a^{\ast }\right] =1$. Here we have $ q=\sqrt{\frac{\hbar }{2Z_{0}}}\left( a+a^{\ast }\right)$ and
$ p=-i\sqrt{%
\frac{\hbar Z_{0}}{2}}\left( a-a^{\ast }\right) $, while the Hamiltonian is then a $H=\hbar \omega _{0}\left( a^{\ast }a+\frac{1}{2}%
\right) $ where the resonant frequency is $\omega _{0}=1/\sqrt{LC}$.
\end{remark}

The field is quantized by imposing equal time canonical commutation
relations 
\begin{eqnarray}
\left[ Q_j(z,t),\pi_k (z^{\prime },t)\right] =i\hbar\delta_{jk} \,\delta (z-z^{\prime }),
\label{eq:CCR_Q_pi}
\end{eqnarray}
however, this needs to be understood as valid for $z,z^\prime >0$ \emph{only}!

This should be equivalent to the commutation relations
\begin{eqnarray}
\left[ a_{\mathrm{in},j }(\omega ),a_{\mathrm{in},k}(\omega ^{\prime })^{\ast }\right]
=\delta _{jk }\,\delta (\omega -\omega ^{\prime }).
\label{eq:CCR_LR}
\end{eqnarray}
We will comment on this further in the following section.

The field Hamiltonian is then 
\begin{eqnarray*}
H_{\mathrm{field}}= 
\int_{0}^{\infty }\hbar \omega \, \mathbf{a}_{\mathrm{in}}(\omega )^{\ast }
\mathbf{a}_{\mathrm{in}}(\omega )d\omega .
\end{eqnarray*}

\section{Models With Memory}
\label{sec:Memory}

Memory effects are incorporated by replacing (\ref{eq:Langevin}) with the
Langevin equation, see for instance \cite{Weiss}, \cite{FLC88},
\begin{eqnarray}
\mathbf{L\ddot{q}}(t)+\int_{-\infty }^{t}\mathbf{\Gamma }\left( t-t^{\prime
}\right) \mathbf{\dot{q}}(t^{\prime })dt^{\prime }+\mathbf{Kq}(t)=\mathbf{F}%
(t),  \label{eq:Langevin_memory}
\end{eqnarray}
where now $\mathbf{\Gamma }(t)$ is the \textit{memory kernel}.

Denoting Laplace transforms as 
\[
\mathscr{L}f\left[ s\right] \triangleq \int_{0}^{\infty }f\left( t\right)
e^{-st}dt, 
\]
we have the relation $\mathscr{L}\mathbf{q}\left[ s\right] =%
\mathbf{\chi }_{-}\left[ s\right] \,\mathscr{L}\mathbf{F}\left[ s\right] $
where the $\textbf{F}$-to-$\textbf{q}$ transfer function matrix is
\begin{eqnarray*}
\mathbf{\chi }_{-}\left[ s\right] =\frac{1}{\mathbf{L}s^{2}+s\,\mathscr{L}%
\mathbf{\Gamma }\left[ s\right] +\mathbf{K}},\quad \text{Re}s>0.
\end{eqnarray*}
From this we obtain the susceptibility as the boundary value of $\mathbf{%
\chi }_{-}\left[ s\right] $:
\begin{eqnarray*}
\mathbf{\alpha }_{-}\left( \omega \right) =\mathbf{\chi }_{-}\left[
0^{+}-i\omega \right] =\frac{1}{-\mathbf{L}\omega ^{2}-i\omega \,\mathbf{R}%
\left( \omega \right) +\mathbf{K}}
\end{eqnarray*}
where we introduce the frequency dependent resistance
\begin{eqnarray*}
\mathbf{R}\left( \omega \right) \triangleq \mathscr{L}\mathbf{\Gamma }\left[
0^{+}-i\omega \right] .
\end{eqnarray*}
In \cite{FLC88} it is shown that for thermal fields, thermodynamic stability in ensured by the criterion that the Laplace transform of the memory kernel is lossless positive real. 

Similarly, the transfer function $\mathbf{G}\left( \omega \right) $ is the
boundary value of the $\mathbf{S}\left[ s\right] $ which is the analytic
function in the right hand plane defined by
\begin{eqnarray}
\mathbf{S}\left[ s\right] \triangleq \frac{\mathbf{L}s^{2}-s\,\mathscr{L}%
\mathbf{\Gamma }\left[ s\right] +\mathbf{K}}{\mathbf{L}s^{2}+s\,\mathscr{L}%
\mathbf{\Gamma }\left[ s\right] +\mathbf{K}},\quad \text{Re}s>0,
\label{eq:Scattering_Matrix}
\end{eqnarray}
that is 
\begin{eqnarray}
\mathbf{G}\left( \omega \right) =\mathbf{S}\left( 0^{+}-i\omega
\right) .
\label{eq:G_S}
\end{eqnarray} 

\subsection{Linear Passive Restriction}
In this section we wish to look at control-theoretic concepts of dissipativity \cite{Willems}. In particular, we reformulate the stability of the model as the lossless bounded real property of $S$.  We recall the basic definitions, \cite{Anderson_Vongpanitlerd}.

\textbf{Definition:} A function $\mathbf{\Sigma }[s]$ of a complex variable $%
s$ is said to be \textit{positive real} if it is analytic in the open right
hand plane ($s$ with Re $s>0$) with
\begin{eqnarray*}
\mathrm{Re\,}{\Sigma }\left[ s\right] \geq 0\quad \left( \text{Re } \,
s>0\right) ,
\end{eqnarray*}
and if its boundary function $\omega \mapsto \mathbf{\Sigma }\left(
0^{+}-i\omega \right) $ is well-defined for almost all real values $\omega
\in \mathbb{R}$ and satisfies 
\begin{eqnarray*}
\mathrm{Re\,}{\Sigma }\left( 0^{+}-i\omega \right)  &\geq &0, \\
\mathbf{\Sigma }\left( 0^{+}-i\omega \right) ^{\ast } &=&\mathbf{\Sigma }%
\left( 0^{+}+i\omega \right) ,
\end{eqnarray*}
for all $\omega \in \mathbb{R}$ with $i\omega $ not a pole of $\mathbf{%
\Sigma }$. If furthermore $\mathrm{Re\,}{\Sigma }\left( 0^{+}-i\omega
\right) =0$, for $i\omega $ not a pole of $\mathbf{\Sigma }$, then we say it
is \textit{lossless positive real} (LPR).

\bigskip 

\textbf{Definition:} A function $\mathbf{S}[s]$ of a complex variable $s$ is
said to be \textit{bounded real} if it is analytic in the open right hand
plane ($s$ with Re $s>0$) with
\begin{eqnarray*}
\mathbf{S}[s]^{\ast }\mathbf{S}[s]\leq I\quad \left( \text{Re\thinspace }%
s>0\right) ,
\end{eqnarray*}
and if its boundary function $\omega \mapsto \mathbf{S}\left( 0^{+}-i\omega
\right) $ is well-defined for almost all real values $\omega \in \mathbb{R}$%
. If furthermore $\mathrm{Re\,}{\Sigma }\left( 0^{+}-i\omega \right) =0$,
for $i\omega $ not a pole of $\mathbf{\Sigma }$, then we say it is \textit{%
lossless bounded real} (LBR).

\begin{proposition}
Let us suppose that we have the relation
\begin{eqnarray}
\mathbf{S}[s]=\frac{I-\mathbf{\Sigma }[s]}{I+\mathbf{\Sigma }[s]}
\label{eq:S_Sigma}
\end{eqnarray}
then $\mathbf{S}[s]$ is LBR if and only if $\mathbf{\Sigma }[s]$ is LPR.
\end{proposition}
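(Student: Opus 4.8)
The plan is to read (\ref{eq:S_Sigma}) as the matrix Cayley (bilinear) transform and to collapse the whole equivalence onto one congruence identity. First I would observe that $\mathbf{S}[s]$ is built entirely out of $\mathbf{\Sigma}[s]$, so the two commute and the quotient in (\ref{eq:S_Sigma}) is unambiguous, $\mathbf{S}=(I-\mathbf{\Sigma})(I+\mathbf{\Sigma})^{-1}=(I+\mathbf{\Sigma})^{-1}(I-\mathbf{\Sigma})$. A short calculation shows the transform is an involution, $\mathbf{\Sigma}=(I-\mathbf{S})(I+\mathbf{S})^{-1}$, which I will use for the converse direction. Throughout, $\mathrm{Re}$ of a matrix means its Hermitian part $\tfrac{1}{2}(\mathbf{\Sigma}+\mathbf{\Sigma}^{\ast})$, so that $\mathbf{\Sigma}$ being positive real is the statement that this part is positive semidefinite.

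The core of the argument is the identity
\[
I-\mathbf{S}^{\ast}\mathbf{S}
=2\,(I+\mathbf{\Sigma}^{\ast})^{-1}\,(\mathbf{\Sigma}+\mathbf{\Sigma}^{\ast})\,(I+\mathbf{\Sigma})^{-1},
\]
which I would obtain by writing $\mathbf{S}^{\ast}\mathbf{S}=(I+\mathbf{\Sigma}^{\ast})^{-1}(I-\mathbf{\Sigma}^{\ast})(I-\mathbf{\Sigma})(I+\mathbf{\Sigma})^{-1}$, placing the difference $I-\mathbf{S}^{\ast}\mathbf{S}$ over the common factors $(I+\mathbf{\Sigma}^{\ast})^{-1}(\,\cdot\,)(I+\mathbf{\Sigma})^{-1}$, and using $(I+\mathbf{\Sigma}^{\ast})(I+\mathbf{\Sigma})-(I-\mathbf{\Sigma}^{\ast})(I-\mathbf{\Sigma})=2(\mathbf{\Sigma}+\mathbf{\Sigma}^{\ast})$. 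Since $(I+\mathbf{\Sigma})^{-1}$ is invertible, the right-hand side is a congruence transform of $\mathbf{\Sigma}+\mathbf{\Sigma}^{\ast}$; by Sylvester's law of inertia it is positive semidefinite precisely when $\mathbf{\Sigma}+\mathbf{\Sigma}^{\ast}\ge0$, and it vanishes precisely when $\mathbf{\Sigma}+\mathbf{\Sigma}^{\ast}=0$. This one identity therefore turns the contraction bound $\mathbf{S}^{\ast}\mathbf{S}\le I$ into $\mathrm{Re}\,\mathbf{\Sigma}\ge0$, and, read on the boundary, turns the lossless condition $\mathbf{S}^{\ast}\mathbf{S}=I$ into $\mathrm{Re}\,\mathbf{\Sigma}=0$.

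It then remains to reconcile the analyticity and boundary clauses of the two definitions. For the forward direction I would check that $I+\mathbf{\Sigma}[s]$ is invertible wherever $\mathrm{Re}\,\mathbf{\Sigma}[s]\ge0$: for nonzero $x$, $\mathrm{Re}\,\langle x,(I+\mathbf{\Sigma})x\rangle=\|x\|^{2}+\mathrm{Re}\,\langle x,\mathbf{\Sigma}x\rangle\ge\|x\|^{2}>0$, so $(I+\mathbf{\Sigma})x\neq0$. Hence analyticity of $\mathbf{\Sigma}$ in the open right half plane passes to $\mathbf{S}$, the boundary value of $\mathbf{S}$ exists wherever that of $\mathbf{\Sigma}$ does, and the reflection relation $\mathbf{\Sigma}(0^{+}+i\omega)=\mathbf{\Sigma}(0^{+}-i\omega)^{\ast}$ carries through the rational map to the analogous statement for $\mathbf{S}$. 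The converse is obtained by repeating these checks for $\mathbf{\Sigma}=(I-\mathbf{S})(I+\mathbf{S})^{-1}$, with poles of $\mathbf{\Sigma}$ corresponding to points where $I+\mathbf{S}$ fails to be invertible, excluded exactly as the definitions exclude poles.

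The step I expect to be the main obstacle is precisely this invertibility in the converse: bounded realness gives only $\|\mathbf{S}[s]\|\le1$, which does not by itself keep $-1$ out of the spectrum of $\mathbf{S}[s]$, so $I+\mathbf{S}[s]$ need not be invertible a priori. I would resolve it with an operator-valued maximum-modulus argument: each scalar function $\langle u,\mathbf{S}[s]v\rangle$ is analytic and bounded by $1$ in the open right half plane, so $\mathbf{S}$ is either a constant unitary (the degenerate case, in which $\mathbf{\Sigma}$ is a constant with vanishing Hermitian part) or strictly contractive at every interior point, and in the latter case $I+\mathbf{S}[s]$ is invertible throughout the interior. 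With this in hand, the congruence identity together with the analyticity and boundary checks above gives the stated equivalence.
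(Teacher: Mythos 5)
Your proof is correct, but it is worth noting that the paper does not actually prove this proposition at all: it simply cites Anderson and Vongpanitlerd (Theorem 2.7.4) and a companion paper for the standard circuit-theoretic fact that an immittance matrix is LPR iff the corresponding scattering matrix is LBR. What you have written is essentially the textbook argument behind that citation, namely the Cayley-transform congruence identity
\begin{equation*}
I-\mathbf{S}^{\ast}\mathbf{S}=2\,\bigl(I+\mathbf{\Sigma}^{\ast}\bigr)^{-1}\bigl(\mathbf{\Sigma}+\mathbf{\Sigma}^{\ast}\bigr)\bigl(I+\mathbf{\Sigma}\bigr)^{-1},
\end{equation*}
which correctly converts $\mathbf{S}^{\ast}\mathbf{S}\leq I$ into $\mathrm{Re}\,\mathbf{\Sigma}\geq 0$ (and equality into equality) by Sylvester's law of inertia, together with the standard invertibility and analyticity checks. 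Your version buys a self-contained, elementary proof where the paper buys brevity. Two small remarks. First, your dichotomy ``constant unitary or strictly contractive'' is not quite right as stated (a block-diagonal $\mathbf{S}$ with one constant unitary block and one strictly contractive block is neither); the correct maximum-modulus statement is that if $\mathbf{S}[s_{0}]x=-x$ at one interior point then $f(s)=\langle x,\mathbf{S}[s]x\rangle$ is identically $-1$ and Cauchy--Schwarz forces $\mathbf{S}[s]x=-x$ for all $s$, so $I+\mathbf{S}$ is singular identically. Second, this whole worry is moot under the proposition's hypothesis: since the relation (\ref{eq:S_Sigma}) is assumed to hold, $I+\mathbf{S}=2(I+\mathbf{\Sigma})^{-1}$ is automatically invertible in the open right half plane, so the converse direction needs no separate invertibility argument.
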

This is a well known result in linear circuit theory. A proof can be found in \cite{Anderson_Vongpanitlerd}, Theorem 2.7.4
where it is established that an immittance matrix is LPR if and only if the scattering matrix
is LBR, thereby relating the lossless property of an immittance matrix to that of
the scattering matrix. See also \cite{ZGPG} for a more complete proof.

\bigskip

\begin{theorem}
\label{thm:S_LBR}
The scattering matrix $\mathbf{S}$ given by (\ref{eq:Scattering_Matrix}) is
LBR if and only if the Laplace transform of the memory kernel is LPR.
\end{theorem}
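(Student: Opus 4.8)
The plan is to recognise (\ref{eq:Scattering_Matrix}) as a Cayley transform and then invoke the preceding Proposition, which says that a matrix of the form $\mathbf{S}=\frac{I-\mathbf{\Sigma}}{I+\mathbf{\Sigma}}$ is LBR if and only if $\mathbf{\Sigma}$ is LPR. So the first step is to put $\mathbf{S}[s]$ into exactly the form (\ref{eq:S_Sigma}) and read off the immittance. Writing the numerator and denominator of (\ref{eq:Scattering_Matrix}) as $(\mathbf{L}s^{2}+\mathbf{K})\mp s\,\mathscr{L}\mathbf{\Gamma}[s]$ and extracting the common factor $\mathbf{L}s^{2}+\mathbf{K}$ on the left gives $\mathbf{S}[s]=(I+\mathbf{\Sigma}[s])^{-1}(I-\mathbf{\Sigma}[s])$ with $\mathbf{\Sigma}[s]=(\mathbf{L}s^{2}+\mathbf{K})^{-1}s\,\mathscr{L}\mathbf{\Gamma}[s]=\mathbf{X}[s]^{-1}\mathscr{L}\mathbf{\Gamma}[s]$, where $\mathbf{X}[s]=\mathbf{L}s+\mathbf{K}s^{-1}$ is the purely reactive immittance of the $LC$ part. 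I would first confirm that the left/right inverse convention used here matches the one in (\ref{eq:S_Sigma}); the extraction is unambiguous because $\mathbf{L}s^{2}+\mathbf{K}$ is common to both halves.

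The second step is to show that this $\mathbf{\Sigma}$ is LPR if and only if $\mathscr{L}\mathbf{\Gamma}$ is. The lever is that the multiplier $\mathbf{X}[s]^{-1}$ is itself lossless positive real: $\mathbf{X}=\mathbf{L}s+\mathbf{K}s^{-1}$ is a sum of the elementary reactances of an inductor and a capacitor with $\mathbf{L},\mathbf{K}\ge 0$, hence LPR, and the inverse of an LPR immittance is again LPR. On the boundary $s=0^{+}-i\omega$ the matrix $\mathbf{X}$ is skew-Hermitian, so $\mathbf{X}^{-1}$ is too, and the Hermitian part of $\mathbf{\Sigma}=\mathbf{X}^{-1}\mathscr{L}\mathbf{\Gamma}$ is then governed by $\mathscr{L}\mathbf{\Gamma}$ alone. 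Concretely I would compute $\mathbf{\Sigma}+\mathbf{\Sigma}^{\ast}$ at $s=0^{+}-i\omega$, use $(\mathbf{X}^{-1})^{\ast}=-\mathbf{X}^{-1}$ together with the reality relation $\mathscr{L}\mathbf{\Gamma}[\bar s]=\mathscr{L}\mathbf{\Gamma}[s]^{\ast}$ inherited from the real kernel $\mathbf{\Gamma}(t)$, and aim to read off that $\mathrm{Re}\,\mathbf{\Sigma}$ vanishes on the axis precisely when the defining lossless condition on $\mathscr{L}\mathbf{\Gamma}$ holds.

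For the open-half-plane requirements I would note that $\mathbf{L}s^{2}+s\,\mathscr{L}\mathbf{\Gamma}[s]+\mathbf{K}=s\,\mathbf{Z}[s]$ with $\mathbf{Z}[s]=\mathbf{L}s+\mathscr{L}\mathbf{\Gamma}[s]+\mathbf{K}s^{-1}$; when $\mathscr{L}\mathbf{\Gamma}$ is positive real and $\mathbf{L},\mathbf{K}\ge 0$ the Hermitian part of $\mathbf{Z}[s]$ is positive definite for $\mathrm{Re}\,s>0$, so $\mathbf{Z}[s]$ and hence the denominator are nonsingular there. This makes $\mathbf{S}[s]$ analytic in the open right half plane, and, as the Cayley transform of a matrix with non-negative Hermitian part, automatically contractive; this supplies the ``bounded'' half, while the imaginary-axis computation of the previous paragraph supplies the ``lossless'' half. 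Running each equivalence in both directions then closes the proof.

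The step I expect to be the main obstacle is the second one, and the difficulty is non-commutativity: $\mathbf{L},\mathbf{K}$ and $\mathscr{L}\mathbf{\Gamma}[s]$ need not be simultaneously diagonalisable, so the multiplier $\mathbf{X}^{-1}$ does not simply cancel in the positive-real test, and one cannot argue that a product of LPR factors is LPR (it is not in general). When $\mathscr{L}\mathbf{\Gamma}$ is Hermitian on the axis one finds $\mathbf{\Sigma}+\mathbf{\Sigma}^{\ast}\propto[\mathbf{X}^{-1},\mathscr{L}\mathbf{\Gamma}]$, so in the scalar or simultaneously-diagonalisable case the reduction to $\mathscr{L}\mathbf{\Gamma}$ is immediate, but in the general matrix case the clean equivalence must be routed through the full immittance/scattering dictionary underlying the preceding Proposition. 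The residual technical points are the poles on the imaginary axis excluded in the definitions, and verifying that no boundary zeros of the denominator spoil analyticity.
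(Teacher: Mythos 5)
Your route is the same as the paper's: invert the Cayley transform (\ref{eq:S_Sigma}) to obtain $\mathbf{\Sigma}[s]=(\mathbf{L}s^{2}+\mathbf{K})^{-1}s\,\mathscr{L}\mathbf{\Gamma}[s]=\mathbf{X}[s]^{-1}\mathscr{L}\mathbf{\Gamma}[s]$ and then invoke the preceding Proposition. That is precisely the published proof, which disposes of the remaining equivalence with the word ``evidently''. You correctly identify that remaining equivalence as the crux, but you locate the difficulty in the wrong place: the obstruction is not non-commutativity, and it is already present for $n=1$. Your own proposed computation exposes it. Since $\mathbf{X}(0^{+}-i\omega)^{-1}$ is skew-Hermitian on the axis, one gets $\mathbf{\Sigma}+\mathbf{\Sigma}^{\ast}=\mathbf{X}^{-1}\mathscr{L}\mathbf{\Gamma}-(\mathscr{L}\mathbf{\Gamma})^{\ast}\mathbf{X}^{-1}$, which vanishes when $\mathscr{L}\mathbf{\Gamma}$ is \emph{Hermitian} on the axis (and commutes with $\mathbf{X}$), not when its Hermitian part vanishes. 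Multiplication by the purely reactive factor $\mathbf{X}^{-1}$ rotates by ninety degrees: in the scalar case $\mathrm{Re}\,\Sigma(0^{+}-i\omega)=\mathrm{Im}\,\mathscr{L}\Gamma(0^{+}-i\omega)\big/\big(\omega L-K/\omega\big)$, so losslessness of $\mathbf{\Sigma}$ is controlled by the \emph{imaginary} part of $\mathscr{L}\mathbf{\Gamma}$ on the axis, whereas LPR for $\mathscr{L}\mathbf{\Gamma}$ is the vanishing of its \emph{real} part there. These are genuinely different conditions; the ohmic kernel, for which $\mathscr{L}\mathbf{\Gamma}\equiv R>0$ is positive real but manifestly not LPR while the associated $\mathbf{S}$ is unimodular on the axis, is a concrete test case your argument must survive. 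So ``in the scalar case the reduction is immediate'' is exactly where the plan breaks, and routing the matrix case through the immittance/scattering dictionary does not repair it.

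To be fair, the paper's own proof asserts the same equivalence without justification (and carries a stray factor of $2$ in $\mathbf{\Sigma}$ that you correctly omit), so you have faithfully reproduced the argument including its weakest link. But a complete proof cannot simply transfer the LPR property across the multiplier $\mathbf{X}^{-1}$; it must confront the fact that the boundary real part of $\mathbf{\Sigma}$ is tied to the reactive (imaginary) part of $\mathscr{L}\mathbf{\Gamma}$, and reconcile that with the paper's later identification $\mathbf{J}(\omega)=\omega\,\mathrm{Re}\,\mathbf{R}(\omega)\geq 0$, which would be identically zero if $\mathscr{L}\mathbf{\Gamma}$ were LPR in the stated sense. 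Your peripheral steps (analyticity of $\mathbf{S}$ in the open right half-plane from positive-realness of $\mathbf{Z}[s]=\mathbf{X}[s]+\mathscr{L}\mathbf{\Gamma}[s]$, contractivity from the Cayley structure, care with axis poles) are sound and are more than the paper provides.
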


\begin{proof}
This is a straightforward corollary since (\ref{eq:S_Sigma}) can be
rearranged as $\mathbf{\Sigma }[s]=\frac{I-\mathbf{S}[s]}{I+\mathbf{S}[s]}$,
and substituting in for (\ref{eq:Scattering_Matrix}) we have specifically
\begin{eqnarray*}
\mathbf{\Sigma }[s]=\frac{2s }{\mathbf{L}s^{2}+\mathbf{K}} \,\mathscr{L}\mathbf{\Gamma }\left[ s\right]
\end{eqnarray*}
so evidently $\mathbf{\Sigma }$ is LPR if and only if $\mathscr{L}\mathbf{%
\Gamma }$ is LPR.
\end{proof}

\bigskip 

\subsection{Spectral Density of the Network}

Let us make the postulate that the scattering matrix is LBR, then we define
the \textit{spectral density matrix} of the network to be
\begin{eqnarray*}
\mathbf{J}\left( \omega \right) \triangleq \omega \,\text{\textrm{Re }}%
\mathbf{R}\left( \omega \right) .
\end{eqnarray*}
By assumption it follows that $\mathbf{J}\left( \omega \right) \geq 0$ where
defined. The spectral density matrix actually determines $\mathbf{R}\left(
\omega \right) $ as, from the Kramers-Kronig relations, we have 
\begin{eqnarray*}
\omega \,\mathrm{Im}\,\mathbf{R}\left( \omega \right) =-\frac{1}{\pi }%
\mathtt{PV}\int \frac{\mathbf{J}(\omega ^{\prime })d\omega ^{\prime }}{%
\omega -\omega ^{\prime }}
\end{eqnarray*}
and combining the two gives 
\begin{eqnarray*}
-i\omega \mathbf{R}\left( \omega \right) =-i\mathbf{J}(\omega )-\frac{1}{\pi 
}\mathtt{PV}\int \frac{\mathbf{J}(\omega ^{\prime })d\omega ^{\prime }}{%
\omega -\omega ^{\prime }}.
\end{eqnarray*}
The memory kernel is then 
\begin{eqnarray*}
\mathbf{\Gamma }\left( t\right) =\frac{1}{\pi }\int_{0}^{\infty }\frac{%
\mathbf{J}\left( \omega \right) }{\omega }\cos \omega t\,d\omega \;\theta
\left( t\right) .
\end{eqnarray*}

\begin{proposition}
In the case that the memory function corresponds to several independent input channels, that is, it takes the diagonal form
$\mathbf{\Gamma }\left( t\right) \equiv \mathrm{diag}(\Gamma
_{1}\left( t\right) ,\cdots ,\Gamma _{n}\left( t\right) )$, then the spectral density will be likewise 
diagonal:
\begin{eqnarray}
\mathbf{J}\left( \omega \right) \equiv \mathrm{diag}(J_{1}\left( \omega
\right) ,\cdots ,J_{n}\left( \omega \right) ) .
\label{eq:J_diag}
\end{eqnarray}
\end{proposition}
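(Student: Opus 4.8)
The plan is to follow the diagonal structure of $\mathbf{\Gamma}(t)$ through each step of the construction of $\mathbf{J}(\omega)$ and observe that every operation involved preserves diagonality. Recall the chain of definitions: the frequency-dependent resistance is the boundary value $\mathbf{R}(\omega) = \mathscr{L}\mathbf{\Gamma}[0^{+} - i\omega]$ of the Laplace transform of the memory kernel, and the spectral density is $\mathbf{J}(\omega) = \omega\,\mathrm{Re}\,\mathbf{R}(\omega)$.

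First I would note that the Laplace transform acts entrywise on the matrix-valued function $\mathbf{\Gamma}(t)$. By hypothesis the off-diagonal entries vanish identically, hence so do their transforms, giving
\begin{eqnarray*}
\mathscr{L}\mathbf{\Gamma}[s] = \mathrm{diag}\left( \mathscr{L}\Gamma_{1}[s], \ldots, \mathscr{L}\Gamma_{n}[s] \right)
\end{eqnarray*}
for every $s$ in the open right hand plane. Passing to the boundary $s = 0^{+} - i\omega$ then makes $\mathbf{R}(\omega)$ diagonal, with diagonal entries $R_{j}(\omega) = \mathscr{L}\Gamma_{j}[0^{+} - i\omega]$.

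Next I would observe that taking the real part preserves diagonality: for a diagonal matrix $M = \mathrm{diag}(m_{1}, \ldots, m_{n})$ one has $\frac{1}{2}(M + M^{\ast}) = \mathrm{diag}(\mathrm{Re}\,m_{1}, \ldots, \mathrm{Re}\,m_{n})$, so whether $\mathrm{Re}$ is read as the Hermitian part (the usual convention for matrix positive-real functions) or as the entrywise real part, the outcome is the same diagonal matrix. Scalar multiplication by $\omega$ leaves this intact, so $\mathbf{J}(\omega) = \mathrm{diag}(J_{1}(\omega), \ldots, J_{n}(\omega))$ with $J_{j}(\omega) = \omega\,\mathrm{Re}\,R_{j}(\omega)$, which is precisely (\ref{eq:J_diag}).

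I do not expect any genuine obstacle: the whole content is that the map $\mathbf{\Gamma} \mapsto \mathbf{J}$ is assembled from the linear Laplace transform, the real (Hermitian) part, and multiplication by a scalar, each of which carries diagonal matrices to diagonal matrices. The single point worth stating explicitly is the commutation of ``diagonal'' with ``take real part,'' which follows at once from the fact that the adjoint of a diagonal matrix is again diagonal. Note that only the forward implication is claimed; recovering a diagonal $\mathbf{\Gamma}$ from a diagonal $\mathbf{J}$ would instead require inverting the Kramers--Kronig reconstruction of the kernel, and is not needed here.
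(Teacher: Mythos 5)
Your argument is correct and is exactly the reasoning the paper relies on: the paper states this proposition without proof, treating it as immediate from the definitions $\mathbf{R}(\omega)=\mathscr{L}\mathbf{\Gamma}[0^{+}-i\omega]$ and $\mathbf{J}(\omega)=\omega\,\mathrm{Re}\,\mathbf{R}(\omega)$, since the entrywise Laplace transform, the (Hermitian or entrywise) real part, and scalar multiplication all preserve diagonality. Your explicit remark that the two readings of $\mathrm{Re}$ coincide on diagonal matrices is a sensible clarification but introduces no new content beyond what the paper takes for granted.
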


The function $J_k$ is then called the \emph{spectral density} for the $k$th transmission line.

\subsection{Hamiltonian Models}
For simplicity we assume that the spectral density is diagonal, as in (\ref{eq:J_diag}).
We consider the Hamiltonian 
\begin{eqnarray}
H &=&\tilde{H}_{\mathrm{comp.}} \nonumber \\
&+&\int d\omega \sum_k
\Bigg\{ 
\frac{\omega }{2J_k\left( \omega \right) }  \left[ \hat{\pi}_k\left( \omega
\right) -\sqrt{\frac{2}{\pi }}\frac{J_k\left( \omega \right) }{\omega }q\right]
^{2}   \nonumber\\
&+&\frac{1}{2} \omega J_k\left( \omega \right) \hat{Q}_k \left( \omega
\right) ^{2} \Bigg\}   \label{eq:Caldeira_Leggett}
\end{eqnarray}
where $\tilde{H}_{\mathrm{comp.}}=\frac{1}{2}\mathbf{p}^\top \mathbf{L}^{-1}\mathbf{p} + \frac{1}{2} \mathbf{q}^\top \tilde{\mathbf{K}} \mathbf{q}$, and we 
introduce canonically conjugate field operators satisfying
\begin{eqnarray*}
\left[ \hat{Q}_j\left( \omega \right) ,\hat{\pi}_k\left( \omega ^{\prime
}\right) \right] =i\hbar \delta_{jk} \,  \delta \left( \omega -\omega ^{\prime }\right) .
\end{eqnarray*}

The Hamiltonian takes the form 
\begin{eqnarray*}
H=H_{\mathrm{comp.}}+H_{B}+H_{\mathrm{int}}
\end{eqnarray*}
with 
\[ H_{\mathrm{comp.}}=\tilde{H}_{S}+\frac{1}{2}\sum_k K_k q_k^{2}\equiv \frac{1}{2}\mathbf{p}^\top \mathbf{L}^{-1}\mathbf{p} + \frac{1}{2} \mathbf{q}^\top \mathbf{K} \mathbf{q}, 
\]
where
$K_k=\frac{2}{\pi }\int \frac{J_k \left( \omega \right) }{\omega }d\omega $ 
(assumed finite for each $k$) so that $\mathbf{K}= \tilde{ \mathbf{K}} +\mathrm{diag} (K_1, \cdots , K_n )$,
and 
\begin{eqnarray*}
H_{B} &=& \hspace{-0.1cm}\sum_k \int \left\{ \frac{\omega }{%
2J_k \left( \omega \right) }\hat{\pi}_k\left( \omega \right) ^{2}+\frac{\omega J_k\left( \omega \right)}{2}
 \hat{Q}_k \left( \omega \right) ^{2}\right\}
d\omega , \\
H_{\mathrm{int}} &=& \hspace{-0.1cm}-\sqrt{\frac{2}{\pi }} \sum_k q_k \int\hat{\pi}_k \left( \omega \right) d\omega .
\end{eqnarray*}

We note that we may introduce the annihilator density 
\begin{eqnarray*}
a_k\left( \omega \right) = \sqrt{ \frac{J_k\left( \omega \right) }{2\hbar }} \hat{%
Q}_k\left( \omega \right) +i\frac{1}{\sqrt{2\hbar J_k\left( \omega \right) }}%
\hat{\pi}_k\left( \omega \right)
\end{eqnarray*}
so that 
\begin{eqnarray*}
\hat{Q}_k\left( \omega \right) &=&\sqrt{\frac{\hbar }{2J_k\left( \omega \right) 
}}\big\{a_k\left( \omega \right) +a_k\left( \omega \right) ^{\ast } \big\}, \\
\hat \pi_k \left( \omega \right) &=&-i\sqrt{\frac{\hbar J_k\left( \omega
\right) }{2}}\big\{ a_k\left( \omega \right) -a_k\left( \omega \right) ^{\ast } \big\}
\end{eqnarray*}
where $\left[ a_j\left( \omega \right) ,a_k\left( \omega ^{\prime }\right)
^{\ast }\right] =\delta_{jk} \delta \left( \omega -\omega ^{\prime }\right) $, and in
these terms we have 
\begin{eqnarray*}
H_{B} &=&\sum_k \int_{0}^{\infty }\hbar \omega \, a_k\left( \omega \right) ^{\ast
}a_k\left( \omega \right) d\omega , \\
H_{\mathrm{int}} &=&i \sum_k q_k \int_{0}^{\infty }\sqrt{\frac{\hbar J_k\left( \omega
\right) }{\pi }} \big\{a_k\left( \omega \right) -a_k \left( \omega \right) ^{\ast
} \big\}d\omega .
\end{eqnarray*}

The equations of motion are 
\begin{eqnarray*}
\dot{\mathbf{q}}_{t} &=& \mathbf{L}^{-1} \mathbf{p}_{t}, \\
\dot{\mathbf{p}}_{t} &=& -\mathbf{K} \mathbf{ q}_{t} +\sqrt{\frac{2}{\pi }%
}\int \hat{\mathbf{\pi}}_{t}\left( \omega \right)
d\omega , \\
\frac{d}{dt}\hat{ Q}_{k,t}\left( \omega \right) &=&\frac{\omega }{J_k\left(
\omega \right) }\hat{\pi}_{k,t}\left( \omega \right) -\sqrt{\frac{2}{\pi }}
q_{k,t}, \\
\frac{d}{dt}\hat{\pi}_{k,t}\left( \omega \right) &=&-\omega J_k\left( \omega
\right) \hat{Q}_{k,t}\left( \omega \right) .
\end{eqnarray*}
(The last pair of equations are equivalent to $\dot{a}_{k,t}\left( \omega
\right) =-i\omega a_{k, t}\left( \omega \right) -\sqrt{\frac{J_k\left( \omega
\right) }{\pi \hbar }} q_{k,t}$.) These lead to 
\begin{eqnarray}
\mathbf{L} \ddot{\mathbf{q}}_{t}+ \mathbf{K} \mathbf{q}_{t} =\sqrt{\frac{2}{\pi }}
\int \hat{\mathbf{\pi} }_{t}\left( \omega \right)
d\omega ,  \label{eq:Ham_qp}
\end{eqnarray}
and 
\begin{eqnarray}
\frac{d^{2}}{dt^{2}}\hat{\pi}_{k,t}\left( \omega \right) +\omega ^{2}\hat{\pi}%
_{k,t}\left( \omega \right) =\sqrt{\frac{2}{\pi }}\omega J_k \left( \omega
\right) q_{k,t}.  \label{eq:Ham_pi}
\end{eqnarray}
The equation (\ref{eq:Ham_pi}) is a linear homogeneous equation in $\hat{\pi}%
_{t}\left( \omega \right) $ and so has solution of the form general
homogeneous solution plus particular solution: 
\begin{eqnarray*}
&&\hat{\pi}_{k,t}\left( \omega \right) = \hat{\pi}_{k,t}^{h}\left( \omega \right) \nonumber\\
&+ &
\sqrt{\frac{2}{\pi }}\frac{J_k \left( \omega \right) }{\omega }\left[
q_{k,t}-\int_{-\infty }^{t}\cos \omega \left( t-t^{\prime }\right) \, \dot{q}%
_{k, t^{\prime }}dt^{\prime }\right]
\end{eqnarray*}
where 
\begin{eqnarray*}
\hat{\pi}_{k,t}^{h}\left( \omega \right) =-i\sqrt{\frac{\hbar J_k\left( \omega
\right) }{2}}\big\{ a_k\left( \omega \right) e^{-i\omega t}-a_k\left( \omega \right)
^{\ast }e^{i\omega t} \big\}
\end{eqnarray*}
Substituting this representation into (\ref{eq:Ham_qp}) yields the \textit{generalized quantum Langevin
equation} 
\begin{eqnarray}
\mathbf{L} \ddot{\mathbf{q}}_{t}+\int \mathbf{\Gamma } \left( t-t^{\prime }\right) \dot{\mathbf{q}}_{t^{\prime
}}dt^{\prime }+ \mathbf{K} \mathbf{q}_{t}
=\mathbf{ F}\left( t\right)
\label{eq:genQLE}
\end{eqnarray}
with the \textit{memory kernel} $\mathbf{\Gamma} (t)= \frac{2}{\pi }\int \frac{\mathbf{J}\left( \omega \right) }{\omega }\cos \omega t\,d\omega
\;\theta \left( t\right) $ and the \textit{generalized
Langevin force} $\mathbf{F}(t) \equiv \sqrt{\frac{2}{\pi}} \int \hat{\mathbf{\pi}}_t (\omega ) d\omega$, that is
\begin{eqnarray}
\Gamma_{jk} \left( t\right) &=& \delta_{jk} \, \frac{2}{\pi }\int \frac{J_k\left( \omega \right) }{\omega }\cos \omega t\,d\omega
\;\theta \left( t\right) ,  \label{eq:memory} \\
F_k \left( t\right) &=&-i\int\sqrt{\frac{%
\hbar J_k \left( \omega \right) }{\pi }}\left( a_k \left( \omega \right)
e^{-i\omega t}-\mathrm{H.c.}\right) d\omega .  \nonumber \\
&&  \label{eq:Langevin_force}
\end{eqnarray}
Here $\theta $ is the Heaviside function, so that the memory function is causal, that is $\mathbf{\Gamma}
\left( t\right) =0$ for $t<0$.

\subsection{Input-Output Relations with Memory}

As before we may write the charge at position $z>0$ and at time $t$ on the
transmission wire as 
\begin{eqnarray*}
\mathbf{Q} \left( t,z\right) =\mathbf{q}_{\mathrm{in}}\left( t+\frac{z}{c}\right) +
\mathbf{q}_{\mathrm{out}} \left( t-\frac{z}{c}\right)
\end{eqnarray*}
where now, generalizing (\ref{eq:free a}),
\begin{eqnarray*}
&& q_{	\mathrm{in/out},k }\left( \tau \right) =\\
&&\int 
\sqrt{\frac{\hbar }{4\pi J_k \left( \omega \right) }}\left[ a_{\mathrm{in/out},k  }(\omega
)e^{-i\omega \tau }+\mathrm{H.c.}\right] d\omega
\end{eqnarray*}
with $a_{\mathrm{in},k}\left( \omega \right) $
identified with $a_k\left( \omega \right) $ above. As before we have 
\begin{eqnarray*}
\mathbf{q}_{\mathrm{out}}(t )= \mathbf{q}(t)-\mathbf{q}_{\mathrm{in}}(t )
\end{eqnarray*}
and we may again determine $\mathbf{q} \left( t\right) $ as a Fourier transform, this
time by solving the generalized Langevin equation (\ref{eq:genQLE}) to get $%
\hat{\mathbf{q}}\left( \omega \right) =\mathbf{\alpha}_- \left( \omega \right) \, \hat{\mathbf{F}}\left(
\omega \right) $ with a generalized susceptibility 
\begin{eqnarray}
\mathbf{\alpha}_- \left( \omega \right) =\left[ \mathbf{K}-\mathbf{L}\omega ^{2}-i\omega  \mathbf{R }\left( \omega \right) \right] ^{-1}  \label{eq:chi}
\end{eqnarray}
and with $\hat{F}_k\left( \omega \right) =-i\sqrt{\frac{\hbar J_k\left( \omega
\right) }{\pi }}\left( a_k \left( \omega \right) e^{-i\omega t}-\mathrm{H.c.}
\right) $. We now obtain 
\begin{eqnarray}
\mathbf{a}_{\mathrm{out}}\left( \omega \right) =\mathbf{G} \left( \omega \right) \mathbf{a}_{\mathrm{in}%
}\left( \omega \right)  \label{eq:io_G}
\end{eqnarray}
where, $\mathbf{G} \left( \omega \right)= 2i\mathbf{\alpha}_-\left( \omega \right) \mathbf{J}\left( \omega \right) -I_n$,
which is given by 
\begin{eqnarray}
\mathbf{G}\left( \omega \right) =-
\frac{\mathbf{K} - \mathbf{L} \omega ^{2}+i\omega  \mathbf{R }\left( \omega \right)^\ast }
{\mathbf{K} - \mathbf{L} \omega ^{2}-i\omega  \mathbf{R }\left( \omega \right)}
,
\end{eqnarray}
in agreement with (\ref{eq:G_S}).

The transfer function is unimodular, so that $a_{\mathrm{out}}(\omega )$
again satisfies the canonical commutation relations: 
\begin{eqnarray*}
\left[ a_{\mathrm{out}}(\omega ),a_{\mathrm{out}}(\omega ^{\prime })^{\ast }%
\right] =\delta \left( \omega -\omega ^{\prime }\right) .
\end{eqnarray*}

\subsubsection{Commutation Relations}
We find the commutation relations
\begin{eqnarray*}
&&[Q_j (z_1 , t_1 ) , Q_k (z_2 ,t_2) ] \\
&=& \frac{i \hbar}{2 \pi} \int d \omega \frac{d \omega}{ \sqrt{ J_j(\omega ) J_k (\omega )}} \\
&& \times \mathrm {Im}  \bigg\{ e^{-i \omega (t_1 -t_2 )} \bigg( 2\delta_{jk}  \cos \frac{ \omega (z_1 - z_2)}{c} \\
&&+G_{jk} (\omega ) e^{i \omega (z_1 +z_2 ) /c} +G_{kj}^\ast (\omega ) e^{-i \omega (z_1 +z_2 ) /c}
\bigg)  \bigg\}.
\end{eqnarray*}
In particular we have the equal time commutation relations
\begin{eqnarray*}
[Q_j (z_1 , t ) , Q_k (z_2 ,t) ]=0,
\end{eqnarray*}
and the terminal charge commutation relations
\begin{eqnarray*}
&&[Q_j (0 , t_1 ) , Q_k (0 ,t_2) ] \\
&=& \frac{i \hbar}{2 \pi} \int d \omega \frac{d \omega}{ \sqrt{ J_j(\omega ) J_k (\omega )}} \\
&& \times \mathrm {Im}  \bigg\{ e^{-i \omega (t_1 -t_2 )} \bigg( 2\delta_{jk}  
+G_{jk} (\omega )  +G_{kj}^\ast (\omega )
\bigg)  \bigg\}.
\end{eqnarray*}

By direct substitution we see that the input fields satisfy the CCR
\begin{eqnarray}
\frac{1}{i \hbar} [ q_{ \mathrm{in},j} (t) , q_{\mathrm{in}, k} (t^\prime ) ] = \delta_{jk} \, \sigma_k (t - t^\prime ),
\end{eqnarray}
where
\begin{eqnarray}
\sigma_k (\tau) = - \frac{1}{2\pi } \int \frac{ \sin  (\omega \tau )}{ J_k (\omega )}  d \omega.
\end{eqnarray}
From the relations (\ref{eq:io_G}) and the unitarity of $\mathbf{G}$, we likewise have
\begin{eqnarray}
\frac{1}{i \hbar} [ q_{ \mathrm{out},j} (t) , q_{\mathrm{out}, k} (t^\prime ) ] = \delta_{jk} \, \sigma_k (t - t^\prime ),
\end{eqnarray}

\begin{remark}
In the $n=1$ ohmic case we find the specific form $\sigma _{\mathrm{ohmic}} (\tau ) = - \frac{1}{4R} \mathrm{sign} (\tau )$.
\end{remark}

\subsubsection{Input-Output Causality}
We also find 
\begin{eqnarray*}
\frac{1}{i \hbar} \left[ q_{\mathrm{in},j}(t),q_{\mathrm{out},k }(t^{\prime })\right]
=g_{jk}(t-t^{\prime })
\end{eqnarray*}
where 
\begin{eqnarray*}
g_{jk} (\tau ) &=&- \frac{1}{2\pi }\int \frac{1}{ \sqrt{J_j (\omega ) J_k (\omega )}} 
\mathrm{Im} \bigg\{ e^{-i\omega \tau } G_{kj} (\omega) \bigg\}  d\omega .
\end{eqnarray*}
Recalling that $\mathbf{G} (\omega ) $ is the boundary function $s=0^+ -i \omega$ of the 
scattering matrix $S [s]$ which we require to be LBR, and in particular analytic in the right hand plane.
It follows that the functions $g_{jk}$ are \textit{causal}: the commutator vanishes if $%
t>t^{\prime }$. This means that past output $q_{\mathrm{out}}(t^{\prime })$
will commute with future input $q_{\mathrm{in}}(t)$.

\subsubsection{The Langevin Force}
The commutation relations for the force are
\begin{eqnarray*}
\frac{1}{i\hbar } \left[ F_j(t),F_k (t^{\prime })\right] =-\delta_{jk} \frac{2}{\pi }
\int J_k(\omega )\sin \omega \left(
t-t^{\prime }\right) \,d\omega .
\end{eqnarray*}

We may take a thermal state for the field, so that 
\begin{eqnarray*}
\langle a_{\mathrm{in},j}\left( \omega \right) ^{\ast }a_{\mathrm{in},k}\left(
\omega ^{\prime }\right) \rangle = \delta_{jk} n_k\left( \omega \right) \delta \left(
\omega -\omega ^{\prime }\right)
\end{eqnarray*}
with $n_j\left( \omega \right) =\frac{1}{e^{\beta_j \hbar \omega }-1}$. The
result is that 
\begin{eqnarray*}
\langle F_j\left( t\right) F_k\left( t^{\prime }\right) \rangle &=&\frac{\hbar 
}{\pi }\int_{0}^{\infty } J_k (\omega )\bigg\{ e^{-i\omega \left( t-t^{\prime }\right)
}\left( n_k\left( \omega \right) +1\right) \\
&& \qquad +e^{i\omega \left( t-t^{\prime }\right) }n_k\left( \omega \right) \bigg\}d\omega .
\end{eqnarray*}
From this we see that the symmetrized two-point correlation is 
\begin{eqnarray*}
&&\frac{1}{2}\langle F_j\left( t\right) F_k\left( t^{\prime }\right) +F_k\left(
t^{\prime }\right) F_j\left( t\right) \rangle \\
&=& \frac{\hbar }{\pi } \delta_{jk} \int J_k(\omega )
\coth \frac{\beta_k \hbar \omega }{2}\,\cos \omega \left( t-t^{\prime
}\right) \,d\omega .
\end{eqnarray*}
This is effectively the Ford-Kac-Mazur model \cite{FKM65} for quantum
dissipation (without memory effects), see also \cite{FLC88} for the non-ohmic case.

\section{Markov Limit}
\label{sec_Markov}

In this section we consider a Markov limit of open systems treated up to
now. Derivations of the Markov limit as a broadband approximation have been given
in \cite{GarZol00} and \cite{Yurke_squeezing}, however it is advantageous to provide a
mathematically rigorous account giving the requisite scaling regime. Our approach will be to to make a weak coupling approximation,
specifically we make the replacements 
\begin{eqnarray*}
J_{k}\left( \omega \right) &\hookrightarrow &\lambda ^{2}\,J_{k}\left(
\omega \right) , \\
t &\hookrightarrow &t/\lambda ^{2}
\end{eqnarray*}
with $\lambda \rightarrow 0$. This is the van Hove limit corresponding to an
interaction of strength $\lambda $ which has vanishing average effect as a
first order perturbation but builds up a non-trivial contribution second
order effect when looked at over long times scales (order $\lambda ^{-2}$).
Once this pre-limit has been identified, we have available the limit theorems of
\cite{AFL}, \cite{AGL} which capture the Markovian limit. For simplicity we shall
work in the vacuum state for the field, but the limit for general Gaussian states
such as thermal or squeezed states is also known in this context.

We shall assume that the component Hamiltonian may be written in mode form
as 
\begin{eqnarray*}
H_{\text{comp.}}=\tilde{H}_{S}+\frac{\lambda ^{2}}{2}\sum_{k}K_{k}q_{k}^{2}=%
\sum_{k}\hbar \Omega _{k}a_{k}^{\ast }a_{k}+O\left( \lambda ^{2}\right)
\end{eqnarray*}
where we will now ignore the correction terms which are order $\lambda ^{2}$.

We also assume that the terminal charges may be written as 
\begin{eqnarray*}
q_{k}=\sqrt{\frac{\hbar }{2}}\left( \sum_{j}Y_{kj}a_{j}+\text{H.c.}\right) 
\end{eqnarray*}
where the $Y_{kj}$ are suitable constants. The interaction takes the form 
\begin{eqnarray*}
H_{\text{int}}&=&-i\hbar \sum_{jk}\left( Y_{kj}a_{j}+\text{H.c.}\right)\\
&&
\otimes \int d\omega \sqrt{\frac{J_{k}\left( \omega \right) }{2\pi }}\left\{
a_{k}\left( \omega \right) -a_{k}\left( \omega \right) ^{\ast }\right\} 
\end{eqnarray*}

The next step is to move to the interaction picture: we have the
Hamiltonians 
\begin{eqnarray*}
H_{0} &=&\tilde{H}_{S}+H_{B}= \\
&&\sum_{k}\hbar \Omega _{k}a_{k}^{\ast }a_{k}+\sum_{k}\int \hbar \omega
a_{k}\left( \omega \right) ^{\ast }a_{k}\left( \omega \right) d\omega , \\
H_{\lambda } &=&H_{0}+\lambda H_{\text{int}},
\end{eqnarray*}
and the unitary transforming to the interaction picture is $e^{itH_{0}/\hbar
}e^{-itH_{\lambda }/\hbar }$ and with rescaled time we have 
\begin{eqnarray*}
U_{\lambda }\left( t\right) =e^{itH_{0}/\lambda ^{2}\hbar }e^{-itH_{\lambda
}/\lambda ^{2}\hbar }.
\end{eqnarray*}
We then have 
\begin{eqnarray*}
\dot{U}_{\lambda }\left( t\right) =-i\Upsilon _{\lambda }\left( t\right)
U_{\lambda }\left( t\right) 
\end{eqnarray*}
where 
\begin{eqnarray}
&&-i\Upsilon _{\lambda }\left( t\right)  =-\frac{1}{\lambda }\sum_{jk}\left(
Y_{kj}a_{j}e^{-i\Omega _{k}t/\lambda ^{2}}+\text{H.c.}\right)   \nonumber  \\
&&\times \int d\omega \sqrt{\frac{J_{k}\left( \omega \right) }{2\pi }}%
\left\{ a_{k}\left( \omega \right) e^{-i\omega t/\lambda ^{2}}-a_{k}\left(
\omega \right) ^{\ast }e^{i\omega t/\lambda ^{2}}\right\} .
\nonumber\\
\label{eq:Upsilon}
\end{eqnarray}

\subsection{Quantum Markov Communication Channels}
Let us introduce the processes
\begin{eqnarray*}
B_{k,\Omega }\left( t,\lambda \right) &=& \frac{1}{\lambda }\int_{0}^{t}d\tau
\int d\omega \\
&& \sqrt{\frac{J_{k}\left( \omega \right) }{2\pi }}a_{k}\left(
\omega \right) e^{-i(\omega -\Omega )\tau /\lambda ^{2}}
\end{eqnarray*}
then we note that
\begin{eqnarray*}
&&\left[ B_{k,\Omega }\left( t,\lambda \right) ,B_{k^{\prime },\Omega
^{\prime }}\left( t^{\prime },\lambda \right) \right]  \\
&=&\delta _{k,k^{\prime }}\frac{1}{\lambda ^{2}}\int_{0}^{t}d\tau
\int_{0}^{t^{\prime }}d\tau ^{\prime }e^{i\left( \Omega -\Omega ^{\prime
}\right) \tau ^{\prime }/ \lambda ^{2}}\\
&&\int d\omega \frac{J_{k}\left( \omega
\right) }{2\pi }e^{-i\left( \omega -\Omega \right) \left( \tau -\tau
^{\prime }\right) /\lambda ^{2}}
\end{eqnarray*}
which converges in the limit $\lambda \rightarrow 0$ to
\begin{eqnarray*}
\delta _{k,k^{\prime }}\delta _{\Omega ,\Omega ^{\prime }}J_{k}\left( \Omega
\right) \,\min \left\{ t,t^{\prime }\right\} .
\end{eqnarray*}
Here we note that $\frac{1}{\lambda ^{2}}e^{-i\left( \omega -\Omega \right)
\left( \tau -\tau ^{\prime }\right) /\lambda ^{2}}\rightarrow 2\pi \,\delta
\left( \omega -\Omega \right) $.

The $B_{k,\Omega }\left( t,\lambda \right) $ are converging to quantum
Wiener processes. We see that different frequencies lead to independent
process on account of the rapidly oscillating phase $e^{i\left( \Omega
-\Omega ^{\prime }\right) \tau ^{\prime }/\lambda ^{2}}$: assuming that we
have a finite number of frequencies $\Omega $, then this follows from the
Riemann-Lebesgue lemma.

We also see that in (\ref{eq:Upsilon}) we may similarly ignore the
contributions from terms $e^{\pm i\left( \Omega _{k}+\omega \right)
t/\lambda ^{2}}$ as these have an associated $2\pi \delta \left( \omega
+\Omega _{k}\right) $ distribution, but $\Omega _{k}>0$ and the spectral
densities are assume to vanish for negative frequencies. Dropping such terms
constitutes the rotating wave approximation and we may rigorously show that
the same limit is obtained when $-i\Upsilon _{\lambda }\left( t\right)$ in 
(\ref{eq:Upsilon}) is replaced by
\begin{eqnarray*}
\frac{1}{\lambda }\sum_{jk}
Y_{kj}a_{j}\int d\omega \sqrt{\frac{J_{k}\left( \omega \right) }{2\pi }}%
a_{k}\left( \omega \right) ^{\ast }e^{i(\omega -\Omega _{k})t/\lambda
^{2}} -\text{H.c.}
\end{eqnarray*}

What one may show rigorously is that the limit evolution (in the sense of
weak convergence of matrix elements) defined on the joint Hilbert space of
the system plus a noise space determined by quantum Wiener processes $%
B_{k,\Omega }\left( t\right) $ labelled by the transmission line indices $k$
and the harmonic frequencies $\Omega $ of the component. We have the quantum It\={o} table
\begin{eqnarray}
dB_{k, \Omega} (t) \, dB_{k^\prime , \Omega^\prime } (t) = \delta_{k , k^\prime } \delta_{\Omega , \Omega^\prime} \,
J_k (\Omega ) \, dt ,
\end{eqnarray}
with all other products of increments vanishing. Note that the total number of independent quantum Wiener processes
equals the number of transmission lines times the number of distinct frequencies $\Omega_k$ of the component.

The limit quantum stochastic differential equation is
\begin{eqnarray}
dU\left( t\right) & =& \bigg\{ \sum_{jk}Y_{kj}a_{j}\otimes dB_{k,\Omega _{j}}\left(
t\right) ^{\ast } \nonumber \\
&& -\sum_{jk}Y_{kj}^{\ast }a_{j}^{\ast }\otimes dB_{k,\Omega
_{j}}\left( t\right) +K\otimes dt \bigg\} U\left( t\right) \nonumber \\
\label{eq:QSDE}
\end{eqnarray}
where
\begin{eqnarray}
K &=& -\sum_{k}\sum_{j,j^{\prime }}\delta _{\Omega _{j},\Omega _{j^{\prime
}}}Y_{kj}Y_{kj^{\prime }}^{\ast }a_{j}^{\ast }a_{j^{\prime }} \nonumber \\
&& \times \int \frac{%
J_{k}\left( \omega \right) }{2\pi }\frac{d\omega }{i\left( \omega -\Omega
_{j}-i0^{+}\right) }.
\end{eqnarray}
In particular, we note that 
\begin{eqnarray*}
K+K^{\ast }=-\sum_{k}\sum_{j,j^{\prime }}\delta _{\Omega _{j},\Omega
_{j^{\prime }}}J_{k}\left( \Omega _{j}\right) Y_{kj}Y_{kj^{\prime }}^{\ast
}a_{j}^{\ast }a_{j^{\prime }}.
\end{eqnarray*}

\subsection{Single Input Case}
For clarity we consider the case of a single transmission line. Here the
terminal charge observable may be written as 
\begin{eqnarray*}
q=\sqrt{\frac{\hbar C}{2L}}\left( a+a^{\ast }\right) 
\end{eqnarray*}
so that $H_{\text{comp.}}=\hbar \Omega a^{\ast }a$ with $\Omega =\left(
LC\right) ^{-1/2}$ and
\begin{eqnarray*}
H_{\text{int}}=-i\lambda \hbar a\otimes \int \sqrt{\frac{J\left( \omega
\right) C}{2\pi L}}a\left( \omega \right) ^{\ast }+\text{H.c.}
\end{eqnarray*}
The associated QSDE for the limit is
\begin{eqnarray*}
dU\left( t\right) &=& \big\{ \sqrt{\gamma }a\otimes dB\left( t\right) ^{\ast }-%
\sqrt{\gamma }a^{\ast }\otimes dB\left( t\right)\\
&& -\kappa a^{\ast }a\otimes
dt\big\} U\left( t\right) 
\end{eqnarray*}
where
\begin{eqnarray*}
\kappa  &=&\int \frac{J\left( \omega \right) C}{2\pi L}\left\{ \pi \delta
\left( \omega -\Omega \right) +i\mathtt{PV}\frac{1}{\omega -\Omega }\right\} 
\\
&=&\frac{1}{2}\gamma +i\varepsilon .
\end{eqnarray*}
In particular
\begin{eqnarray*}
\gamma =2\text{Re}\left\{ \kappa \right\} =\frac{J\left( \Omega \right) C}{L}
\end{eqnarray*}
or $\gamma =R/L$ where $R=J\left( \Omega \right) /\Omega $. The imaginary component $\varepsilon$ of the damping corresponds to a shift of the resonant frequency, some times referred to as the Lamb shift \cite{Louisell}, \cite{AGL}, but is often negligble in size.

\subsection{General Input-State-Output Case}
The Heisenberg equations of motion for the mode $a_j$ are readily deduced from (\ref{eq:QSDE}). The observable describing the mode at time $t$ is
\begin{eqnarray}
a_j (t) \triangleq U(t)^\ast \left( a_j \otimes I \right) U(t),
\end{eqnarray}
and from the quantum It\={o} rule and (\ref{eq:QSDE}) we find
\begin{eqnarray}
d a_j (t)&=& - \sum_k \sum_{j^\prime} \delta_{\Omega_j , \Omega_{j^\prime}} \kappa_{kj} Y^\ast_{kj} Y_{kj^\prime} a_{j^\prime} (t)\, dt \nonumber \\
&& - \sum_k Y^\ast_{kj} dB_{k , \Omega_j} (t),
\label{eq:Heisenberg}
\end{eqnarray}
where $\kappa_{kj}= \int \frac{J_k ( \omega ) d \omega }{ 2 \pi i \, (\omega -\Omega_j -i0^+ )}$.

Similarly, the output fields are defined by 
\begin{eqnarray}
B^{\mathrm{out}}_{k, \Omega} (t) \triangleq U(t)^\ast \left(I\otimes B_{k ,\Omega} (t) \right) U(t),
\end{eqnarray}
and from the quantum It\={o} rule and (\ref{eq:QSDE}) we find
\begin{eqnarray}
d B^{\mathrm{out}}_{k, \Omega} (t) &=& d B_{k , \Omega} (t)
+ \sum_j   \delta_{\Omega_j , \Omega }     Y_{kj } a_{j } (t)\, dt ,
\label{eq:output}
\end{eqnarray}
where $\Omega $ is one of the resonant frequencies for the component.

It is worth emphasizing that (\ref{eq:Heisenberg}) and (\ref{eq:output}) are linear equations in the \lq\lq state variables"
$ a_j (t)$ as well as the Markovian input channels $B_{k, \Omega} (t)$ as well as the output channels $B^{\mathrm{out}}_{k, \Omega} (t)$. In fact they take the vectorized form
\begin{eqnarray}
d \mathbf{a} (t) &=& A \mathbf{a} (t) dt + B d \mathbf{B} (t)  \nonumber \\
d \mathbf{B}^{\mathrm{out}} (t) &=& C  \mathbf{a} (t) dt + d \mathbf{B} (t)
\end{eqnarray}
which takes the form of an $A-B-C-D$ linear input-state-output system. While these are fundamentally quantum systems, the obvious link to classical control systems has lead to a very natural theory of quantum control  \cite{YK}, \cite{GGY}-\cite{GZ} that has had significant impact on quantum engineering.

\section{Conclusions}
Our interest in non-Markovian models arises out of necessity - the natural models of classical/quantum transmission lines are not Markovian, even in the ohmic case where they lead to a $\delta$-correlated memory kernel. In particular, the standard theory of quantum filtering cannot apply. In \cite{XJSUP_15}, for instance, a tractable approach to filtering a non-Markov model is given but this relies on the being able to approximate the system as a subsystem of a Markov model. Ultimately, the extraction of information using quantum filtering techniques can only proceed if we have a Markov model. In our case it is clear that there is an idealized Markov model which approximates the non-Markov model - specifically we have Markov channels associated with each transmission line and labelled by the resonant frequencies $\Omega_j$ of the circuit. The field quanta are therefore those in transmission line that are approximately on the mass-shell ($\omega = \Omega_j$). A more detailed model would have to assume the existence of an appropriate oscillator in any measuring apparatus so as to set up an approximate Markov channel which is measured.

While transmission lines can give a wholly Hamiltonian model of dissipation it is worth recalling that they were originally
introduced to model transmission of information (electrical signals in Heaviside's theory), and not just a mathematical trick to get a Hamiltonian dilation of resistive models. The coupling to transmission lines can give a model of a linear passive heat bath, but they can also serve to relay (noisy!) signals into a system.

An obvious question to ask at present is the following: What are the connection rules for these non-Markovian quantum circuits corresponding to the series product, feedback reduction, etc.? At one level the answer should be the same as for classical circuits: e.g., Kirchhoff's current/voltage laws. However, one can consider non-linear models where the terminal charge (or current) couples to a quantum mechanical system. This type of modeling is generic in the \emph{SLH} framework for quantum feedback networks \cite{GouJam09a,GouJam09b} and here one has modular rules for building networks from Markovian components. 

To answer the question, we need to look at the network structure of systems connected by non-Markovian transmission lines. While the general theory will be modular, it will be much less applicable than the Markovian quantum feedback network theory, but there are important consistency questions arising. For instance whether the connection rules and the Markov approximation are commuting steps. We shall address these in later publications. However, in this publication we have set out the theory for simple linear components and derived their stability criteria, and their Markov approximation.

\end{document}